\pgfplotsset{compat=1.18}
\theoremstyle{plain}
\newtheorem{prop}{Proposition}[section]
\newtheorem{thm}{Theorem}[section]
\newtheorem{lemma}{Lemma}[section]
\newtheorem{corollary}{Corollary}[section]
\theoremstyle{definition}
\newtheorem{definition}{Definition}[section]
\theoremstyle{remark}
\newtheorem*{rmk}{Remark}
\NewDocumentCommand{\KD}{}{Kirkwood-Dirac }
\NewDocumentCommand{\KDpos}{}{\mathcal{E}_{\mathrm{KD}+}}
\NewDocumentCommand{\KDpospure}{}
{\mathcal{E}_{\mathrm{KD}+}^{\mathrm{pure}}}
\NewDocumentCommand{\KDpospuregen}{}
{\mathcal{E}_{\mathrm{KD}+}^{\mathrm{pure,gen}}}
\NewDocumentCommand{\Wpos}{}{\mathcal{E}_{\mathrm{W}+}}
\NewDocumentCommand{\Wpospure}{}{\mathcal{E}_{\mathrm{W}+}^{\mathrm{pure}}}
\NewDocumentCommand{\KDr}{}{V_{\mathrm{KDr}}}
\NewDocumentCommand{\Def}{m}{\emph{#1}}
\NewDocumentCommand{\proj}{m}{|#1\rangle\langle#1|}
\NewDocumentCommand{\tr}{}{\mathrm{Tr}}
\newacronym{lca}{LCA}{locally compact abelian}
\newacronym{sclca}{SCLCA}{second countable locally compact abelian}
\title{Characterizing the Kirkwood-Dirac positivity on second countable LCA groups}
\author[1]{Matéo Spriet}
\affil[1]{Univ. Lille, CNRS, Inria, UMR 8524, Laboratoire Paul Painlevé, F-59000 Lille, France}
\date{\today}
\begin{document}

\maketitle

\begin{abstract}
    We define the \KD quasiprobability representation of quantum mechanics associated with the Fourier transform over second countable locally compact abelian groups. We discuss its link with the Kohn-Nirenberg quantization of the phase space $G\times \widehat{G}$. We use it to argue that in this abstract setting the Wigner-Weyl quantization, when it exists, can still be interpreted as a symmetric ordering. Then, we identify all generalized (non-normalizable) pure states having a positive \KD distribution. They are, up to the natural action of the Weyl-Heisenberg group, Haar measures on closed subgroups. This generalizes a result known for finite abelian groups. We then show that the classical fragment of quantum mechanics associated with the \KD distribution is non-trivial if and only if the group has a compact identity component. Finally, we
    provide for connected compact abelian groups a complete geometric description of this classical fragment.
\end{abstract}

\section{Introduction}

Quasiprobability representations of quantum mechanics have been defined and studied since the early stages of the theory as an approach to understand the differences between classical and quantum systems. While introduced as purely theoretical objects, the recent progress in quantum information and quantum computing has shown that they are important tools to study the various ``quantum advantages'' that can be provided by quantum computers over their classical counterparts \autocite{KD_review}, \autocite{Pashayan_Wallman_Bartlett}.

We recall that a quantum system is modeled as follows: consider a separable Hilbert space $\mathcal{H}$. The quantum states of the system are density matrices, i.e. positive trace class operators on $\mathcal{H}$ having trace $1$, while the observables are (bounded) self-adjoint operators on $\mathcal{H}$. The model is probabilistic, and the expectation value of the observable $A$ in the state $\rho$ is given by
\begin{equation}
    \langle A\rangle_{\rho}=\tr(\rho A).
\end{equation}
A quasiprobability representation of such a quantum system is composed of a measurable space $(\Lambda,\mathcal{T}_{\Lambda})$ together with a pair of continuous linear maps $(\mu,\xi)$, such that $\mu$ associates to any trace class operator a complex measure of finite variation on $\Lambda$, while $\xi$ associates to any bounded operator a bounded measurable function on $\Lambda$.  It is asked that those maps satisfy $\xi[1]=1$ and that the overlap formula holds:
\begin{equation}
\label{overlap}
    \tr(\rho A)=\int_{\Lambda}\xi[A](\lambda)d\mu[\rho](\lambda).
\end{equation}
For a quantum state $\rho$, it follows that $\mu[\rho](\Lambda)=1$, which is why we call $\mu[\rho]$ a quasiprobability distribution. The function $\xi[A]$ is referred to as the symbol of the bounded operator $A$.
A fundamental result \autocite{Ferrie} states that a quasiprobability representation of quantum mechanics cannot satisfy both of the following conditions:
\begin{enumerate}
    \item For any quantum state $\rho$, $\mu[\rho]$ is a probability measure on $\Lambda$.
    \item For any bounded positive self-adjoint operator $A$, $\xi[A]$ is a positive function on $\Lambda$.
\end{enumerate}
This result shows that quantum mechanics is intrinsically different from classical mechanics, as its model differs from any usual probabilistic model. However, for a fixed quasiprobability representation of quantum mechanics, an interesting problem is to determine the set of quantum states $\rho$ for which $\mu[\rho]$ is a probability measure, and the set of (positive) observables $A$ for which $\xi[A]$ is a real (positive) function, which can be thought of as a ``classical fragment'' of the quantum system, with respect to the chosen quasiprobability representation. It has been shown that any algorithm running on a quantum computer that uses only states and measurements that are all in this classical fragment, as well as unitary evolutions that preserve positivity of $\mu$, can be simulated efficiently using a classical computer and the representation itself \autocite{Pashayan_Wallman_Bartlett}. In particular, if one wants to obtain a quantum advantage, one needs to ``get out'' of this classical fragment. This is the reason why, for a fixed quasiprobability representation, it is useful to describe the associated classical fragment as precisely as possible. This problem is in general not easy to tackle. For example for the Wigner distributions, while the  pure states that belong to the classical fragment are characterized in most cases \autocite{Hudson, Gross_Wigner, Crann, NicolaRiccardi2025}, characterizing the mixed states in this fragment remains an open problem, despite efforts and progress \autocite{Bröcker_Werner, Mandilara_Karpov_Cerf, Cerf_Van_Herstraeten_Beam_splitter}. The goal of this article is to address the problem of characterizing the classical fragment for the Kirkwood-Dirac representation associated with the Fourier transform over \gls{sclca} groups.

The first and most famous of quasiprobability representations is the Wigner-Weyl representation introduced in 1932 \autocite{Wigner}, which is still intensively studied today. Since the 80's, the Glauber-Sudarshan $P$ function \autocite{Cahill_Glauber,Sudarshan} and the Husimi $Q$ function \autocite{Husimi} have also played important roles in the study of quantum optical systems. Those representations are defined over the Hilbert space $L^2(\mathbb{R}^n)$ using the position and momentum operators, which are related via the Fourier transform. In 1933, J. Kirkwood defined another distribution associated to the position and momentum operators \autocite{Kirkwood}. In 1945, P.A.M. Dirac independently introduced a generalization of this distribution, associated with arbitrary pairs of non-commuting operators over any Hilbert space \autocite{Dirac}. This distribution is now called the Kirkwood-Dirac distribution. It is also known in signal analysis as the Rihaczek distribution \autocite{Rihaczek}. The purpose of P.A.M. Dirac was to construct functions of two non-commuting operators $A$ and $B$ by choosing an ordering. He considers polynomials $p(a,b)=\sum_{i,j}p_{i,j}a^ib^j$, $(a,b)\in \sigma(A)\times \sigma(B)$, where $\sigma(A)$ and $\sigma(B)$ are the spectra of $A$ and $B$, respectively. He then defines the operator $p(A,B)=\sum_{i,j}p_{i,j}A^iB^j$. For example the polynomial function $(a,b)\mapsto ab=ba$ is mapped to the operator $AB$, which in general differs from the operator $BA$. In that sense there is a choice of ordering. It can be checked that the \KD distribution (associated with $A$ and $B$) of the operator $p(A,B)$ is indeed the polynomial $p(a,b)$ \autocite{KD_review}. When $A$ and $B$ are the position and momentum operators respectively, this is known as the standard ordering, or Kohn-Nirenberg quantization \autocite{Kohn_Nirenberg_original,Kohn_Nirenberg_quantization, Folland}. We discuss in section \ref{Ordering} the relation between the \KD distribution and the Kohn-Nirenberg quantization in the setting of \gls{sclca} groups, recovering the definition given in \autocite{Kohn_Nirenberg_quantization} and in \autocite{FeichtingerKozek98}. We also argue that, when it exists, the Wigner-Weyl distribution defined in \autocite{Crann} interpolates between the \KD distribution and its complex conjugate, and that the associated quantization can therefore be interpreted as a symmetric ordering.

More recently, further finite dimensional quasiprobability representations were introduced, such as the discrete \KD distribution (see \autocite{KD_review} for a review), the Margenau-Hill distribution \autocite{Margeneau_Hill} and the Gross-Wigner function \autocite{Gross_Wigner}, in the framework of quantum information and quantum computing. The Gross-Wigner function is meant to provide a finite dimensional equivalent to the original Wigner function mentioned above. It is constructed using the discrete Fourier transforms over the finite abelian groups $\mathbb{Z}/d\mathbb{Z}$. However, its definition is valid only if $d$ is an odd number. In particular, it cannot be used to study systems of $n$ qubits, which are at the core of quantum computing and which would correspond to the abelian groups $(\mathbb{Z}/2\mathbb{Z})^n$. Recently, the \KD representation associated with the Fourier transform on finite abelian groups was introduced and studied in \autocite{KD_finite_groups}. It is meant to be a finite dimensional equivalent to the \KD representation associated with position and momentum operators. As opposed to the Gross-Wigner representation, this definition comes without any further restriction regarding the structure of the groups. This allowed the authors of \autocite{KD_simulability} to study some simulability properties of systems of $n$ qubits using the \KD distribution associated with the Fourier transform over the groups $(\mathbb{Z}/2\mathbb{Z})^n$. Furthermore, the Wigner-Weyl representation has been defined and studied in the general setting of \gls{sclca} groups \autocite{Hennings, Crann, NicolaRiccardi2025}. Again in this framework, the definition is valid only if the doubling map $g\mapsto g+g$ is invertible. 
In the present article, we extend the definition of the \KD representation associated with the Fourier transform to the abstract setting of \gls{sclca} groups, without any further restriction. This definition is very general, as it covers the original case introduced by Kirkwood, the case of finite abelian groups, as well as the $n$-dimensional torus or the $p$-adic numbers, for example.

For a \gls{sclca} group $G$, we fix a Haar measure and consider the Hilbert space $L^2(G)$. We recall in section \ref{Preliminaries} the basic group theoretical notions needed in this paper. To extend the class of objects for which we can define a \KD distribution, we introduce the notion of generalized operators. We denote by $\mathcal{S}(G)$ the set of Schwartz-Bruhat functions on $G$ and by $\mathcal{S}'(G)$ the set of tempered distribution on $G$.

\begin{definition}
\label{generalized operator}
    A \emph{generalized operator} is a continuous linear map $A:\mathcal{S}(G)\to\mathcal{S}'(G)$ represented by a distributional kernel $k_A\in \mathcal{S}'(G\times G)$ such that for any $\phi_1,\phi_2\in \mathcal{S}(G)$ we have 
    \begin{equation}
        \label{integral kernel}
        (A\phi_1)(\phi_2)=\langle k_{A},\overline{\phi_2}\otimes\phi_1\rangle=\int_{G\times G}\overline{\phi_2(g)}k_A(g,g')\phi_1(g')d\mu_G(g)d\mu_G(g').
    \end{equation}
\end{definition}
The integral expression in \eqref{integral kernel} is an abuse of notation that we will use systematically.

We say that a generalized operator $A$ is symmetric if it satisfies
\begin{equation}
    (A\phi_1)(\phi_2)=\overline{(A\phi_2)(\phi_1)}
\end{equation}
for any $\phi_1,\phi_2\in \mathcal{S}(G)$.

It is clear that any Hilbert-Schmidt operator $A$ on $L^2(G)$ is a generalized operator, as it can be represented by a square integrable kernel $k_A\in L^2(G\times G)$. To any generalized operator, we associate a \KD distribution, which is a tempered distribution on the phase space $G\times \widehat{G}$.

\begin{definition}
\label{KD def}
    Let $A$ be a generalized operator with distributional kernel $k_A$. The \emph{\KD distribution} of $A$ is the tempered distribution on $G\times \widehat{G}$ defined by
    \begin{equation}
        \mathrm{KD}_{A}(g,\chi)=\overline{\chi(g)}\int_{G}k_{A}(g,g')\chi(g')d\mu_G(g'),
    \end{equation}
    to be understood in the sense of distributions.
\end{definition}
This definition is a straightforward extension of the known definitions of the \KD distribution for $\mathbb{R}^n$ and for finite abelian groups. It appears in the literature on signal analysis as the Rihaczek distribution \autocite{Rihaczek} and was considered for elementary locally compact abelian groups in \autocite{FeichtingerKozek98}.

A generalized operator $A$ is called a \Def{generalized pure state} if it is symmetric and has rank one. It is easily checked that the kernel of a generalized pure state is of the form
\begin{equation}
    k_A=\psi\otimes \overline{\psi}
\end{equation}
for some $\psi\in \mathcal{S}'(G)$. In that case, we write in Dirac notation $A=\proj{\psi}$.
The \KD distribution of a generalized pure state $\proj{\psi}$ has the simple form
\begin{equation}
\label{KD pure}
    \mathrm{KD}_{\psi}(g,\chi)=\overline{\chi(g)}\psi(g)\overline{\hat{\psi}(\chi)},
\end{equation}
where $\hat{\psi}$ is the Fourier transform of $\psi$.

\begin{definition}
\begin{itemize}
    \item A symmetric generalized operator is called \Def{\KD positive} (or simply KD positive) if its Kirkwood-Dirac distribution is a positive measure on $G\times \widehat{G}$.
    \item A symmetric generalized operator $A$ is called \Def{\KD real} (or KD real) if its \KD distribution is a real tempered distribution, meaning that for any real valued $f\in \mathcal{S}(G\times \widehat{G})$, $\langle \mathrm{KD}_A,f\rangle\in \mathbb{R}$.
    \end{itemize}
\end{definition} 

Our main goal is to study the classical fragment of quantum mechanics associated with the \KD distribution. This article contains complete characterizations of
 \begin{itemize}
    \item $\KDpospuregen$, the set of KD-positive generalized pure states,
    \item $\KDpospure$, the set of KD-positive pure states.
\end{itemize}
We then provide, depending on the topological properties of the considered \gls{sclca} group $G$, partial or complete descriptions of
\begin{itemize}
    \item $\KDpos$, the convex set of KD-positive states,
    \item $\KDr$, the real vector space of KD-real Hilbert-Schmidt observables.
\end{itemize}

The first result, proved in section \ref{Main result}, is a description of the set $\KDpospuregen$ that holds in full generality:

\begin{thm}
\label{KD positive on second countable LCA groups}
    Let $G$ be a \gls{sclca} group. The generalized pure state $\proj{\psi}$ associated with $\psi\in \mathcal{S}'(G)\setminus\{0\}$ has a positive \KD distribution if and only if there exists a closed subgroup $H\subset G$ such that, up to the action of the Weyl-Heisenberg group, $\psi$ is a Haar measure on $H$.
\end{thm}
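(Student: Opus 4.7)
The theorem has two directions.

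\textbf{Sufficient direction.} The idea is to plug $\psi = \mu_H$ directly into \eqref{KD pure}. By Pontryagin duality, the Fourier transform $\widehat{\mu_H}$ equals a positive constant times $\mu_{H^\perp}$, the Haar measure on the annihilator $H^\perp \subset \widehat G$. Therefore $\mathrm{KD}_{\mu_H}(g,\chi)$ is a positive constant times $\overline{\chi(g)}\,(\mu_H\otimes\mu_{H^\perp})$; by definition of the annihilator, $\chi(g) = 1$ for all $(g,\chi) \in H \times H^\perp$, so the character factor is trivial on the support and we are left with a positive multiple of the product Haar measure. A routine check then shows that Weyl-Heisenberg translations and modulations translate $\mathrm{KD}_\psi$ in phase space and multiply it by a global phase, so positivity is preserved.

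\textbf{Necessary direction, step 1: from distribution to measure.} Assume $\mathrm{KD}_\psi$ is a positive measure. Since $\overline{\chi(g)}$ is a smooth function of modulus one, the tensor-product distribution $\psi \otimes \overline{\widehat\psi}$ differs from $\mathrm{KD}_\psi$ only by multiplication by this bounded smooth factor, hence is itself a complex Radon measure on $G \times \widehat G$. Testing against $f\otimes\phi$ with $\phi$ chosen so that $\langle \overline{\widehat\psi},\phi\rangle \neq 0$ gives that $\psi$ is a complex Radon measure on $G$, and symmetrically $\widehat\psi$ is a complex Radon measure on $\widehat G$. The polar decomposition yields $\psi = e^{i\alpha}\,d|\psi|$ and $\widehat\psi = e^{i\beta}\,d|\widehat\psi|$, and positivity of $\mathrm{KD}_\psi$ becomes the identity $\chi(g) = e^{i(\alpha(g) - \beta(\chi))}$, holding $|\psi|\otimes|\widehat\psi|$-almost everywhere on $\text{supp}|\psi| \times \text{supp}|\widehat\psi|$.

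\textbf{Step 2: normalization and identification.} Move a point of $\text{supp}|\psi|$ to $0_G$ and a point of $\text{supp}|\widehat\psi|$ to $1_{\widehat G}$ via a Weyl-Heisenberg translation and modulation. Specializing the phase identity to $\chi = 1_{\widehat G}$ (resp.\ $g = 0_G$) forces $\alpha$ (resp.\ $\beta$) to be constant on its support, which I absorb in a global phase. The identity then reduces to $\chi(g) = 1$ on $\text{supp}\psi \times \text{supp}\widehat\psi$, and in particular $\text{supp}\widehat\psi \subseteq H^\perp$, where $H$ denotes the closed subgroup of $G$ generated by $\text{supp}\psi$. Now $\widehat\psi$, being the Fourier transform of a positive measure on $G$ supported in $H$, is invariant under $H^\perp$-translations; combined with being supported in $H^\perp$, this forces $\widehat\psi$ to be a non-negative scalar multiple of $\mu_{H^\perp}$. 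By Pontryagin duality, $\psi = c\mu_H$, a Haar measure on $H$, as required.

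\textbf{Main obstacle.} The key subtlety is the first step of the necessary direction: upgrading $\psi$ from a tempered distribution to a Radon measure, without which the polar decomposition and the pointwise phase-matching identity are not well-defined. A secondary delicacy is the passage from "$\widehat\psi$ is $H^\perp$-invariant and supported on $H^\perp$" to "$\widehat\psi = c\mu_{H^\perp}$", which rests on uniqueness of Haar measure on the closed subgroup $H^\perp$ and on Pontryagin's bi-annihilator theorem identifying the generated subgroup with the double annihilator.
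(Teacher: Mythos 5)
Your overall skeleton matches the paper's (sufficiency via the Poisson--Tate formula and Weyl--Heisenberg covariance; necessity by upgrading $\psi$ and $\hat\psi$ to complex measures, extracting a phase identity, locating the supports in $H\times H^\perp$, and concluding by uniqueness of Haar measure -- your last step, pushing the invariance to $\hat\psi$ and pulling back by Fourier, is just the dual of the paper's weak uncertainty lemma). However, there is a genuine gap at the central step of the necessary direction. After polar decomposition you only know that $\chi(g)=e^{i(\alpha(g)-\beta(\chi))}$ holds $|\psi|\otimes|\hat\psi|$-\emph{almost everywhere}, with $\alpha$ and $\beta$ merely measurable functions defined up to null sets. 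You then ``specialize the phase identity to $\chi=1_{\widehat G}$ (resp.\ $g=0_G$)'' after translating support points there. This is not legitimate: a single point of $\mathrm{supp}|\hat\psi|$ (or of $\mathrm{supp}|\psi|$) typically has measure zero -- e.g.\ when the measures are continuous, as for Haar measure on a non-discrete subgroup -- so the a.e.\ identity says nothing at $\chi=1$ or $g=0$, and $\beta(1)$, $\alpha(0)$ are not even well defined. This is exactly the point where the paper invests most of its effort: it proves that, up to a constant phase, $\hat\psi$ is a positive measure by a limiting argument over shrinking neighbourhoods $A_n$ of a support point of $\psi$ with controlled phase cancellation, uniform smallness of $|\chi(g)-1|$ for $\chi$ in compact balls $B_R\subset\widehat G$, and a gluing of the resulting constants $\lambda_R$; only after that does it use continuity in $\chi$ (for fixed $g$) to evaluate at $1\in\mathrm{supp}\hat\psi$.

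The gap is repairable without the paper's machinery, but not by the argument you wrote: instead of evaluating at the special points $0_G$ and $1_{\widehat G}$, apply Fubini to the a.e.\ identity to pick a \emph{generic} $\chi_0$ (of full $|\psi|$-slice) with $|v(\chi_0)|=1$; this gives $e^{i\alpha(g)}=\chi_0(g)e^{i\beta(\chi_0)}$ for $|\psi|$-a.e.\ $g$, so a modulation by $\overline{\chi_0}$ and a global phase (both in the Weyl--Heisenberg group) make $\psi$ positive; a second Fubini selection of a generic $g_0$ and a translation then make $\hat\psi$ positive, after which your support argument and the Haar-uniqueness conclusion go through. As submitted, though, the proof asserts precisely the conclusion of the hard step without a valid justification, so it is incomplete.
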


This theorem is a natural generalization of a result of \autocite{KD_finite_groups}. The proof has a similar underlying strategy, but is more technical as one has to make distinctions between functions, measures and distributions, notions which merge together in the case of finite abelian groups. 

The second main result, that we prove in section \ref{2nd result}, states simple topological
criteria on the group G that are equivalent to fact that the classical fragment
of quantum mechanics associated with the Kirkwood-Dirac distribution is nontrivial.

\begin{thm}
\label{existence of KD real}
    Let $G$ be a \gls{sclca} group and let $G_0$ be its identity component. The following are equivalent:
    \begin{enumerate}
        \item $G_0$ is compact.
        \item $G$ admits a compact open subgroup.
        \item $\KDpospure\neq \emptyset$.
        \item $\KDpos\neq \emptyset$.
        \item $\KDr\neq \{0\}$.
    \end{enumerate}
\end{thm}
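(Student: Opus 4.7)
My plan is to establish the chain of implications $(1) \Leftrightarrow (2) \Rightarrow (3) \Rightarrow (4) \Rightarrow (5) \Rightarrow (1)$. The equivalence $(1) \Leftrightarrow (2)$ is a classical fact from LCA structure theory: any open subgroup of $G$ contains the identity component by connectedness, so a compact open subgroup forces $G_0$ to be compact; conversely, if $G_0$ is compact then $G/G_0$ is totally disconnected LCA and admits a compact open subgroup whose preimage in $G$ is itself compact and open. For $(2) \Rightarrow (3)$ I would take a compact open subgroup $K \subseteq G$ and the unit vector $\psi := \mu_G(K)^{-1/2}\mathbf{1}_K \in L^2(G)$; the orthogonality relation $\int_K \chi\,d\mu_G = \mu_G(K)\,\mathbf{1}_{K^\perp}(\chi)$ combined with the explicit formula \eqref{KD pure} yields $\mathrm{KD}_\psi = \mathbf{1}_{K\times K^\perp}$, a positive measure on $G \times \widehat{G}$. (This is also a direct consequence of Theorem~\ref{KD positive on second countable LCA groups}.) The steps $(3)\Rightarrow(4)$ (pure states are states) and $(4)\Rightarrow(5)$ (KD-positive states are trace-class, hence Hilbert-Schmidt, self-adjoint, with a real KD distribution) are immediate.

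The main step is $(5) \Rightarrow (1)$. Starting from a nonzero KD-real Hilbert-Schmidt observable $A$ with kernel $k_A \in L^2(G\times G)$, the idea is to combine the symmetry $k_A(g,g') = \overline{k_A(g',g)}$ with KD-reality to derive a rigid functional equation. Writing $\mathrm{KD}_A(g,\chi) = \int_G k_A(g,g+u)\chi(u)\,d\mu_G(u)$ as a partial Fourier transform in $u$ (and observing that $\mathrm{KD}_A$ is an $L^2$-function since $A$ is Hilbert-Schmidt), the reality condition forces $k_A(g,g+u) = \overline{k_A(g,g-u)}$ a.e., and combined with symmetry this gives
\begin{equation}
    k_A(g,g+u) = k_A(g-u,g) \quad \text{for a.e.\ } (g,u) \in G\times G.
\end{equation}
In the Haar-preserving change of variables $\tilde{k}(u,g) := k_A(g,g+u)$, this reads $\tilde{k}(u,g) = \tilde{k}(u,g-u)$: for a.e.\ fixed $u$, the section $g \mapsto \tilde{k}(u,g)$ is translation-invariant under $u$, hence under the closed subgroup $\overline{\langle u\rangle} \subseteq G$.

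The decisive input is then Weil's integration formula: for any closed subgroup $H \subseteq G$, an $H$-invariant function $f \in L^2(G)$ descends to $\tilde{f} \in L^2(G/H)$ with $\|f\|_{L^2(G)}^2 = \mu_H(H)\,\|\tilde{f}\|_{L^2(G/H)}^2$, which forces $f = 0$ whenever $H$ is non-compact. Applying this to $H = \overline{\langle u\rangle}$ together with Fubini, $\tilde{k}$ must be supported on $B \times G$ where $B := \{u \in G : \overline{\langle u\rangle}\text{ is compact}\}$ is the subgroup of compact elements of $G$. Since $A \neq 0$, $B$ has positive Haar measure; a Steinhaus-type argument then shows that any measurable subgroup of positive Haar measure in an LCA group is open, so $B$ is open in $G$ and contains $G_0$. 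Hence every element of $G_0$ generates a compact subgroup, and by the structure of connected LCA groups ($G_0 \cong \mathbb{R}^n \times C$ with $C$ compact connected) this rules out any $\mathbb{R}$-factor, forcing $G_0 = C$ to be compact and giving $(1)$. The main obstacle I anticipate is the careful interpretation of KD-reality at the level of $L^2$ functions (rather than only distributionally) and the justification of the functional equation pointwise a.e.; the LCA structural inputs at the end are entirely classical.
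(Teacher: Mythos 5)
Your proof is correct, and for the only non-trivial implication it takes a genuinely different route from the paper. The easy parts coincide in substance: the paper gets $1\Rightarrow 2$ from the structure theorem $G\simeq\mathbb{R}^n\times H_1$ (your van Dantzig/properness argument is an equivalent classical alternative), $2\Rightarrow 3$ from the explicit state $\mu_G(K)^{-1/2}\mathds{1}_K$ exactly as you compute, and $3\Rightarrow4\Rightarrow5$ trivially. For $5\Rightarrow1$ the paper argues on the Fourier side: by O'Conell's formula (lemma \ref{support of characteristic function}) a KD-real $A$ has $\mathrm{supp}(\mathfrak{X}^0_A)\subset\{(g,\chi):\chi(g)=1\}$, then Tonelli gives a positive-measure set of $\chi$ with $\mu_G(\langle\chi\rangle^{\perp})>0$, each such $\langle\chi\rangle^{\perp}$ is open hence contains $G_0$, so $\mu_{\widehat G}(G_0^{\perp})>0$, $G_0^{\perp}$ is open, and Pontryagin duality ($G_0\simeq\widehat{\widehat G/G_0^{\perp}}$ with $\widehat G/G_0^{\perp}$ discrete) yields compactness. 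You instead stay on the kernel side: your functional equation $\tilde k(u,g)=\tilde k(u,g-u)$ is precisely the Fourier-dual form of the paper's support condition, and you conclude by killing sections invariant under non-compact $\overline{\langle u\rangle}$ via Weil's integration formula, landing on the subgroup $B$ of compact elements, made open by a Steinhaus argument, and finishing with the structure of connected LCA groups ($G_0\simeq\mathbb{R}^n\times C$). Both arguments are sound; the paper's buys brevity and reuses the characteristic-function lemma that also drives theorem \ref{EKD+ on connected}, while yours avoids introducing $\mathfrak{X}^0$ and O'Conell's formula altogether at the cost of three extra classical inputs (Weil's formula, Steinhaus, structure of connected groups). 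Two small points to tidy: the measurability of $B$ that your Steinhaus step presumes is automatic since the set of compact elements of an LCA group is in fact a \emph{closed} subgroup (Hewitt--Ross), or alternatively apply Steinhaus to the measurable positive-measure set $\{u:\|\tilde k(u,\cdot)\|_{L^2}>0\}$, whose difference set lies in $B$; and the identity $\|f\|^2_{L^2(G)}=\mu_H(H)\|\tilde f\|^2_{L^2(G/H)}$ should be phrased as the Weil-formula estimate forcing $f=0$ when $\mu_H(H)=\infty$, rather than as an equality of finite quantities.
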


As a consequence of this result, any group of the form $\mathbb{R}\times H$, where $H$ is some \gls{sclca} group, does not admit any non-zero KD-real Hilbert-Schmidt observable nor any KD-positive state. Hence there is no classical fragment of quantum mechanics with respect to the Kirkwood-Dirac distribution in these cases.
This applies in particular for $G=\mathbb{R}$, for which the \KD distribution is referred to as the ``continuous variable \KD distribution'' \autocite{KD_review}.
However, there exist generalized symmetric observables that are \KD positive, and thanks to Theorem \ref{KD positive on second countable LCA groups} we are able to list the generalized pure states that are \KD positive. As the closed subgroups of $\mathbb{R}$ are $\{0\}, \mathbb{R}$ and $a\mathbb{Z}$ for all $a>0$, the KD-positive generalized pure states are given by, up to any complex constant:
\begin{itemize}
    \item The ``position basis'' $(\ket{x})_{x\in \mathbb{R}}$, where $\braket{y|x}=\delta(x-y)$ for any $y\in \mathbb{R}$.
    \item The ``momentum basis'' $(\ket{p})_{p\in \mathbb{R}}$, where $\braket{y|p}=e^{ipy}$ for any $y\in \mathbb{R}$.
    \item The Dirac combs $(\ket{\psi^a_{x,p}})_{a>0,x\in  \mathbb{R},p\in \mathbb{R}}$, where
    \begin{equation*}
        \braket{y|\psi^a_{x,p}}=\sum_{n\in \mathbb{Z}}e^{inp}\delta(y-an +x),\:  \forall y\in \mathbb{R}.
    \end{equation*}
    Those generalized states are also known as ``GKP states'' \autocite{GKP}.
\end{itemize}

Now let's consider the case of the circle $G=\mathbb{S}^1\subset \mathbb{C}$. It is a compact and connected \gls{sclca} group. The closed subgroups of $\mathbb{S}^1$ are the sets of $N$-roots of unity $\{\omega_n=e^{2\pi i \frac{n}{N}}\}_{1\leq n\leq N}$ for any $N\geq 1$, and $\mathbb{S}^1$. The corresponding KD-positive generalized states are, up to any complex constant:
\begin{itemize}
    \item $(\ket{\psi^N_{w,k}})_{N\geq1,w\in \mathbb{S}^1,k\in \mathbb{Z}}$, where
    \begin{equation}
        \braket{z|\psi^N_{w,k}}=\sum_{n=1}^Ne^{-2\pi i\frac{nk}{N}}\delta(z-\omega_n+w),\:\forall z\in \mathbb{S}^1.
    \end{equation}
    \item $\ket{k}_{k\in \mathbb{Z}}$, where $\braket{z|k}=z^k$ for any $z\in \mathbb{S}^1$.
\end{itemize}
Since $\mathbb{S}^1$ is compact, the family $\ket{k}_{k\in \mathbb{Z}}$ of KD-positive generalized pure states consists of square integrable functions. Therefore, in this case, we have $\KDpospure=\{\proj{k}:k\in \mathbb{Z}\}$. Moreover,  we are able to describe the sets $\KDpos$ and $\KDr$ as
\begin{equation}
    \KDr=\overline{\mathrm{span}_{\mathbb{R}}(\KDpospure)} \text{ and }
        \KDpos=\mathrm{conv}(\KDpospure),
\end{equation}
where the former closure is taken with respect to the Hilbert-Schmidt norm, while $\mathrm{conv}(\KDpospure)$ denotes the closed convex hull of $\KDpospure$ in the trace norm topology.
This means that the KD-positive states and KD-real Hilbert-Schmidt observables are exactly those that are diagonal in the Fourier basis. This result holds true whenever the considered \gls{sclca} group $G$ is compact and connected, and it is the last main result of our paper, proved in section \ref{3rd result}. In this case the set $\KDpospure$ consists of all the orthogonal projectors $\proj{\chi}$ with $\chi\in \widehat{G}$ and we have

\begin{thm}
\label{EKD+ on connected}
    Let $G$ be a connected, compact, \gls{sclca} group. Then
    \begin{equation}
    \label{no exotic states}
        \KDr=\overline{\mathrm{span}_{\mathbb{R}}(\KDpospure)} \text{ and }
        \KDpos=\mathrm{conv}(\KDpospure),
    \end{equation}
    where the former closure is taken with respect to the Hilbert-Schmidt topology while the latter closed convex hull is taken in the trace norm topology.
\end{thm}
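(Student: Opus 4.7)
The plan is to first pin down $\KDpospure$ explicitly using Theorem~\ref{KD positive on second countable LCA groups}, and then recast both identities in \eqref{no exotic states} as statements about the matrix of $A$ in the orthonormal character basis $(\chi)_{\chi\in\widehat{G}}$ of $L^2(G)$, which is available because $G$ compact makes $\widehat{G}$ discrete and Peter-Weyl applies. For the identification: by Theorem~\ref{KD positive on second countable LCA groups}, every element of $\KDpospuregen$ is, up to the Weyl-Heisenberg action, the Haar measure of some closed subgroup $H\subseteq G$. Such a Haar measure belongs to $L^2(G)$ only when $H$ is open in $G$, and in a connected group the only open subgroup is $G$ itself. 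The Haar measure on $G$ is a nonzero constant, invariant under translations, and modulation by $\chi_0\in\widehat{G}$ turns it into a multiple of $\chi_0$. Hence $\KDpospure=\{\proj{\chi}:\chi\in\widehat{G}\}$.

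For the first identity of \eqref{no exotic states}, I would expand a symmetric Hilbert-Schmidt operator as $A=\sum_{\alpha,\beta\in\widehat{G}}A_{\alpha,\beta}\ket{\alpha}\bra{\beta}$, with $\sum|A_{\alpha,\beta}|^2<\infty$ and $A_{\alpha,\beta}=\overline{A_{\beta,\alpha}}$. A short computation based on $\int_G\overline{\beta(g)}\psi(g)d\mu_G(g)=\delta_{\beta,\psi}$ shows that, writing $\widehat{G}$ additively, the $\psi$-slice of $\mathrm{KD}_A$ is the Fourier series $\sum_\alpha A_{\alpha,\psi}(\alpha-\psi)(g)$. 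Requiring each such slice to be a real distribution and combining with the symmetry relation yields, after one manipulation, the recurrence $A_{\alpha,\beta}=A_{\beta,2\beta-\alpha}$. Iterating this produces an arithmetic progression $a_k=\alpha+k(\beta-\alpha)$ in $\widehat{G}$ along which every matrix entry $A_{a_k,a_{k+1}}$ equals $A_{\alpha,\beta}$. The decisive input is Pontryagin duality: connectedness of $G$ is equivalent to torsion-freeness of $\widehat{G}$, so for $\alpha\neq\beta$ the progression is bi-infinite and the Hilbert-Schmidt bound forces $A_{\alpha,\beta}=0$. The diagonal entries are real by the $\eta=0$ case of the recurrence, so $A=\sum_\chi A_{\chi,\chi}\proj{\chi}$ with real, square-summable coefficients, which is exactly the Hilbert-Schmidt closure of $\mathrm{span}_{\mathbb{R}}\KDpospure$. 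The reverse inclusion is immediate: each $\proj{\chi}$ is KD-real, and $A\mapsto\mathrm{KD}_A$ is an $L^2$-isometry (by Parseval) onto $L^2(G\times\widehat{G})$, so $\KDr$ is closed in Hilbert-Schmidt norm.

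For the second identity, any $A\in\KDpos$ is a fortiori KD-real, hence by the previous paragraph of the form $A=\sum_\chi c_\chi\proj{\chi}$ with $c_\chi\in\mathbb{R}$. Positivity of $A$ as a density operator forces $c_\chi\geq 0$, and unit trace gives $\sum_\chi c_\chi=1$. Since $\|\proj{\chi}\|_1=1$, the series converges in trace norm and exhibits $A$ as a trace-norm limit of finite convex combinations drawn from $\KDpospure$. The reverse inclusion uses that $\KDpos$ is closed in trace norm, which follows from the fact that trace-norm convergence of states implies weak convergence of the associated KD measures and preserves both positivity and total mass $1$.

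The main obstacle is the arithmetic-progression argument of the second paragraph. It is precisely there that the topological hypothesis of connectedness enters in an essential way, via the torsion-freeness of the dual, and where the algebraic constraint coming from KD-reality must be coupled with the analytic input of Hilbert-Schmidt summability to force the off-diagonal entries to vanish. The remaining steps are either direct bookkeeping in the character basis or consequences of Theorem~\ref{KD positive on second countable LCA groups}.
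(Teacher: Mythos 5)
Your proof is correct, but its core mechanism differs from the paper's. The paper proceeds via the characteristic function: by O'Conell's formula and lemma \ref{support of characteristic function}, KD-reality forces $\mathrm{supp}(\mathfrak{X}^0_A)\subset\{(g,\chi):\chi(g)=1\}=\bigcup_{\chi}\langle\chi\rangle^{\perp}\times\{\chi\}$, and connectedness enters through the fact that every proper closed subgroup $\langle\chi\rangle^{\perp}$ ($\chi\neq 1$) has Haar measure zero while $\widehat{G}$ is countable, so $\mathfrak{X}^0_A$ is essentially supported on $G\times\{1\}$ and $\mathrm{KD}_A$ is independent of $g$, i.e.\ $A$ is diagonal in the Fourier basis; the inclusion $\mathrm{conv}(\KDpospure)\subset\KDpos$ is then supplied by the trace-norm closed convexity of $\KDpos$ (proposition \ref{KD pos closed} via Bochner's theorem). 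You instead work directly with the matrix $(A_{\alpha,\beta})$ in the character basis, extract from KD-reality of each $\chi$-slice plus hermiticity the relation $A_{\alpha,\beta}=A_{\beta,2\beta-\alpha}$, and use the equivalence (for compact $G$) between connectedness and torsion-freeness of $\widehat{G}$ to produce infinitely many distinct pairs carrying the same entry, which the Hilbert--Schmidt bound forces to vanish; connectedness thus enters through a purely algebraic property of $\widehat{G}$ rather than through the measure-theoretic support argument, and your route to trace-norm closedness of $\KDpos$ (trace-norm convergence dominates Hilbert--Schmidt convergence, hence a.e.\ convergence of KD densities along a subsequence) is more elementary than the paper's Bochner-type argument. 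What the paper's approach buys is uniformity with the rest of section 4: the same support lemma drives theorem \ref{existence of KD real} and the structural results (Lindenstrauss/Edgar) that are of independent interest; what yours buys is a self-contained, hands-on argument that isolates exactly where connectedness is used. Two cosmetic points to tidy up: the finite partial sums $\sum_{\chi\in F}c_\chi\proj{\chi}$ have trace $\sum_{\chi\in F}c_\chi<1$, so they are subconvex rather than convex combinations --- either renormalize or add $(1-\sum_{\chi\in F}c_\chi)\proj{\chi_0}$ for a fixed $\chi_0$ to land in $\mathrm{conv}(\KDpospure)$; and the phrase ``weak convergence of the associated KD measures'' should be replaced by the precise statement that $\|\cdot\|_2\leq\|\cdot\|_1$ gives $L^2$-convergence of the KD densities, whence a.e.\ nonnegativity passes to the limit along a subsequence.
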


The problem of characterizing groups for which one or two equalities of \eqref{no exotic states} hold remains open.
We conjecture that
\begin{equation}
    \KDr=\overline{\mathrm{span}_{\mathbb{R}}(\KDpospure)}
\end{equation}
is always true. It is known to hold for any finite abelian group \autocite{KD_finite_groups}.
Regarding the equality
\begin{equation}
\label{conv}
    \KDpos=\mathrm{conv}(\KDpospure),
\end{equation}
it is proved in \autocite{KD_finite_groups} that it is true for the abelian groups $\mathbb{Z}/p^k\mathbb{Z}$, where $p$ is any prime number and $k\geq 1$. However, the authors prove that it does not hold in general for finite abelian groups, as it does not hold for the groups $\mathbb{Z}/2\mathbb{Z}\times \mathbb{Z}/2\mathbb{Z}$ and $\mathbb{Z}/6\mathbb{Z}$.
In view of those results and of the present paper, a lead for thought could be to consider \gls{sclca} groups $G$ for which the set of open and compact subgroups of $G$ is totally ordered for inclusion. We conjecture that this property is sufficient for \eqref{conv} to hold. It might also be necessary.

We conclude in section \ref{KD vs Wigner} by comparing some of our results with earlier results about the positivity of quantum states with respect to the Wigner distribution.

\section{The Kirkwood-Dirac distribution over SCLCA groups}

In this preliminary section, we first collect the mathematical properties of \gls{sclca} groups, their associated Weyl-Heisenberg group and its unitary and irreducible representation on $L^2(G)$ needed in what follows. We then define the characteristic function of any Hilbert-Schmidt operator and show that its symplectic Fourier transform is equal to the \KD distribution. This allows us to make a connection between the \KD distribution and the Wigner distribution defined in \autocite{Crann}. We expect all results in this section to be known to the experts, but we present them here in a form suitable for our purposes.

\subsection{Preliminaries}
\label{Preliminaries}
We begin by recalling some facts about \gls{sclca} groups. Unless stated otherwise, any result mentioned in this subsection can be found in \autocite{Rudin} (chapter I).

A \gls{sclca} group is an abelian group $G$ endowed with a topology for which the group law and inversion map are continuous, and that makes $G$ a Hausdorff, locally compact and second countable space.
The Pontryagin dual $\widehat{G}$ associated with $G$ is the set of continuous group morphisms from $G$ to the unit circle $\mathbb{S}^1\subset \mathbb{C}$. Endowed with pointwise multiplication and the topology of uniform convergence on compact subsets, $\widehat{G}$ is also a \gls{sclca} group if $G$ is a \gls{sclca} group \autocite{Pontryagin} (chapter V), and Pontryagin duality states that there is an isomorphism
\begin{equation}
    \label{Pontryagin duality}
    \widehat{\widehat{G}}\simeq G.
\end{equation}
In this paper, we will use additive notation for the group law of $G$ and multiplicative notation for the group law of $\widehat{G}$.
If $H$ is a closed subgroup of $G$, both $H$ and $G/H$ are \gls{sclca} groups for the natural topologies. The annihilator of a closed subgroup $H\subset G$ is defined as
\begin{equation}
    H^{\perp}=\{\chi\in \widehat{G}:\chi(g)=1,\: \forall g\in H\}.
    \end{equation}
It is a closed subgroup of $\widehat{G}$, and, in view of Pontryagin duality \eqref{Pontryagin duality}, $(H^{\perp})^{\perp}\simeq H$. There is moreover a canonical isomorphism
\begin{align*}
\label{isomorphism}
   H^{\perp} &\to \widehat{G/H}\\
    \chi&\mapsto ([g]\mapsto \chi(g)).
\end{align*}

The general structure of \gls{sclca} groups is known and will be useful to us: any such group is isomorphic to a group of the form
\begin{equation}
    \label{structure theorem}
    \mathbb{R}^n\times H_1,
\end{equation}
where $n\geq 0$ is an integer and $H_1$ is a \gls{sclca} group containing a compact open subgroup. This result is known as the structure theorem \autocite{Weil}.

For a \gls{sclca} group $G$, we fix a Haar measure, denoted $\mu_G$. We recall that a Haar measure is a (positive) Radon measure which is invariant under the translations by any element of $G$. Such a measure always exists and is unique up to a positive scaling. The group $G$ is then compact if and only if $\mu_G$ is finite and in that case, we assume that $\mu_G$ is rescaled so that it is a probability measure. For $p>0$, $L^p(G)$ denotes the usual $p$-Lebesgue space, with respect to the chosen Haar measure $\mu_G$. For any $\psi\in L^1(G)$, the Fourier transform of $\psi$ is the continuous map $\hat{\psi}$ defined on $\widehat{G}$ by
\begin{equation}
    \label{Fourier transform}
    \hat{\psi}(\chi)=\int_G\psi(g)\overline{\chi(g)}d\mu_G(g).
\end{equation}
The Fourier inversion formula states that the Haar measure $\mu_{\widehat{G}}$ on $\widehat{G}$ can be rescaled so that, whenever $\psi\in L^1(G)$ and $\hat{\psi}\in L^1(\widehat{G})$, 
\begin{equation}
    \label{Fourier inversion}
    \psi(g)=\int_{\widehat{G}}\hat{\psi}(\chi)\chi(g)d\mu_{\widehat{G}}(\chi)
\end{equation}
holds almost everywhere. From now on, having fixed a Haar measure on $G$, we assume that the Haar measure on $\widehat{G}$ is scaled so that the Fourier inversion formula holds. In that case, Plancherel's theorem ensures that for any $\psi,\phi\in L^1(G)\cap L^2(G)$
\begin{equation*}
    \int_G\overline{\psi(g)}\phi(g)d\mu_G(g)=\int_{\widehat{G}}\overline{\hat{\psi}(\chi)}\hat{\phi}(\chi)d\mu_{\widehat{G}}(\chi),
\end{equation*}
and thus the Fourier transform extends to a unitary operator $\mathcal{F}:L^2(G)\to L^2(\widehat{G})$.

In \autocite{Bruhat_distributions}, F. Bruhat extends the notion of Schwartz functions on $\mathbb{R}^n$ to arbitrary \gls{lca} groups $G$ via an inductive limit process. We refer to \autocite{Osborne_Schwartz_functions} for a simple characterization of Schwartz-Bruhat functions. In this paper, the set of Schwartz-Bruhat functions will be denoted by $\mathcal{S}(G)$. The space $\mathcal{S}'(G)$ of tempered distributions is then defined as the dual of $\mathcal{S}(G)$ with respect to the inductive limit topology. Three facts will be useful to us. The first one is that, in the sense of distributions
\begin{equation}
    \int_{G}\chi(g)d\mu_G(g)=\delta_1(\chi),\: \forall \chi\in \widehat{G}
\end{equation}
and
\begin{equation}
    \int_{\widehat{G}}\chi(g)d\mu_{\widehat{G}}(\chi)=\delta_0(g),\: \forall g\in G.
\end{equation}
Those are just the Fourier inversion formula over $G$ and $\widehat{G}$, stated in the language of tempered distributions. We also need that the space $\mathcal{S}(G)$ is dense in $L^2(G)$, and that the Fourier transform $\mathcal{F}$ is a continuous bijection between $\mathcal{S}(G)$ and $\mathcal{S}(\widehat{G})$, see \autocite{Osborne_Schwartz_functions}. The Fourier transform therefore extends by duality to a bijective continuous map $\mathcal{F}:\mathcal{S}'(G)\to\mathcal{S}'(\widehat{G})$.

This setting brings a natural Hilbert space on which one can study quantum mechanics: $L^2(G)$. We will denote by $\mathcal{T}_1(G)$ the space of trace class operators of $L^2(G)$ and $\mathcal{T}_2(G)$ the space of Hilbert-Schmidt operators. The former is a Banach space for the trace norm $||A||_1=\tr(|A|)$, while the latter is a Hilbert space for the inner product $\langle A,B\rangle=\tr(A^*B)$. It is well known that $\mathcal{T}_2(G)$ is unitarily isomorphic to $L^2(G\times G)$, via $(A\psi)(g)=\int_Gk_{A}(g,g')\psi(g')d\mu_G(g')$, where $k_{A}\in L^2(G\times G)$ is called the integral kernel representation of $A$ \autocite{Reed_and_Simon}. Note that we have $\mathcal{T}_1(G)\subset \mathcal{T}_2(G)$ as sets. 

Second countability of $G$ makes $L^2(G)$ a separable Hilbert space (see for example \autocite{Cohn}). This then makes $\mathcal{T}_1(G)$ a separable Banach space, and thus a separable dual Banach space, as it is the dual of the Banach space of compact operators on $L^2(G)$ endowed with the operator norm \autocite{Reed_and_Simon}.

A state on $L^2(G)$ is an operator $\rho\in \mathcal{T}_1(G)$ which is positive semi-definite and such that $\tr(\rho)=1$. If a state $\rho$ is a rank one orthogonal projector to the subspace generated by some $\psi\in L^2(G)$ with $||\psi||_2=1$, it is called a pure state, and it is denoted in Dirac notation $\rho=\proj{\psi}$. The integral kernel of such a pure state is given by $k_{\proj{\psi}}(g,g')=\psi(g)\overline{\psi(g')}$. The set of states is a closed convex subset of $\mathcal{T}_1(G)$, whose extreme points are exactly the pure states. A state which is not pure is called a mixed state.

An observable on $L^2(G)$ is a bounded self-adjoint operator. A generalized operator $A$ is, as mentioned in definition \ref{generalized operator}, a linear operator represented by a distributional kernel $k_A\in \mathcal{S}'(G\times G)$.

We finish by mentioning that while all the definitions that we will give carry over naturally to arbitrary \gls{lca} groups, the proof of Theorem \ref{KD positive on second countable LCA groups} and everything that follows require the assumption of second countability. 

\subsection{The Kirkwood-Dirac distribution}

We recall from definition \ref{KD def} that, for a generalized operator $A$ with distributional kernel $k_A$, its \KD distribution is the tempered distribution on $G\times \widehat{G}$
\begin{equation*}
    \mathrm{KD}_{A}(g,\chi)=\overline{\chi(g)}\int_{G}k_{A}(g,g')\chi(g')d\mu_G(g'),
\end{equation*}
One observes that $\mathrm{KD}_{A}$ is nothing more than a character multiplied by the (inverse) Fourier transform of $k_{A}$ with respect to the second variable. 
It follows directly from the Plancherel formula that, restricted to the space of Hilbert-Schmidt operators on $L^2(G)$,
\begin{equation}
\label{KD unitary}
    \mathrm{KD}:\mathcal{T}_2(G)\to L^2(G\times \widehat{G})
\end{equation}
is a unitary isomorphism, in particular it is invertible. The integral kernel of the operator is recovered from the \KD distribution by the formula
\begin{equation}
    \label{inverse KD}
    k_A(g,g')=\int_{\widehat{G}}\mathrm{KD}_{A}(g,\chi)\chi(g-g')d\mu_{\widehat{G}}(\chi).
\end{equation}
The fact that it is unitary also implies that the associated symbol is the \KD distribution itself, as for any $A,B\in \mathcal{T}_2(G)$,
\begin{equation}
    \tr(A^*B)=\int_{G\times \widehat{G}}\overline{\mathrm{KD}_A(g,\chi)}\mathrm{KD}_B(g,\chi)d\mu_G(g)d\mu_{\widehat{G}}(\chi).
\end{equation}
This fact is mathematically very convenient and it is at the core of the \KD representation. Indeed, it is an overlap formula (see \eqref{overlap}) and thus defines a quasiprobability representation.

 If $\rho$ is a quantum state such that $k_{\rho}\in \mathcal{S}(G\times G)$, we can compute
\begin{align}
\label{marginals}
    &\int_{\widehat{G}}\mathrm{KD}_{\rho}(g,\chi)d\mu_{\widehat{G}}(\chi)=k_{\rho}(g,g):=\bra{g}\rho\ket{g},\\
    &\int_{G}\mathrm{KD}_{\rho}(g,\chi)d\mu_G(g)=\int_{G^2}k_{\rho}(g,g')\overline{\chi(g)}\chi(g')d\mu_G(g)d\mu_G(g'):=\bra{\chi}\rho\ket{\chi}\\
    &\int_{G\times \widehat{G}}\mathrm{KD}_{\rho}d\mu_Gd\mu_{\widehat{G}}=\tr(\rho)=1.
\end{align}
Those three expressions show that the \KD distribution of a quantum state is a quasiprobability distribution that is compatible with the Born rule: $\bra{g}\rho\ket{g}$ is usually interpreted as the probability distribution of the position of the particle described by the quantum system $L^2(G)$, while $\bra{\chi}\rho\ket{\chi}$ is interpreted as the probability distribution of the momentum of the particle. This makes the \KD distribution a quantum equivalent of a joint probability distribution.
However, $\mathrm{KD}_{\rho}$ is in general not a probability measure on $G\times \widehat{G}$, as it may take negative or even non-real values. 

\subsection{Weyl-Heisenberg group, Kirkwood-Dirac and Wigner-Weyl distributions and ordered quantizations of phase space}
\label{Ordering}

The groups $G$ and $\widehat{G}$ act naturally on $L^2(G)$ in the following fashion:
For $g\in G$ and $\chi\in \widehat{G}$, define the operators $T_g$ and $M_{\chi}$ by
\begin{equation}
    (T_g\psi)(g')=\psi(g'-g)
\end{equation}
and
\begin{equation}
    (M_{\chi}\psi)(g')=\chi(g')\psi(g'),
\end{equation}
for any $\psi\in L^2(G)$ and any $g'\in G$.
It follows from invariance of the Haar measure that both those operators are unitary. The set of unitary operators on $L^2(G)$ is denoted $\mathcal{U}(L^2(G))$. Moreover, they satisfy the commutation relations
\begin{equation*}
    M_{\chi}T_g=\chi(g)T_gM_{\chi}
\end{equation*}
for any $(g,\chi)\in G\times \widehat{G}$. To keep track of this phase factor, one defines the Weyl-Heisenberg group associated with $G$ by $H(G)=G\times\widehat{G}\times \mathbb{S}^1$, with group law
\begin{equation}
\label{group law}
    (g,\chi,z)\cdot(g',\chi',z')=(g+g',\chi\chi',zz'\overline{\chi'(g)})
\end{equation}
This group then acts naturally on $L^2(G)$ with the following properties:
\begin{thm}
\label{Weyl-Heisenberg}
    The following action of $H(G)$ over $L^2(G)$ is unitary and irreducible:
    \begin{align*}
        U:H(G)&\longrightarrow \mathcal{U}(L^2(G))\\
        (g,\chi,z)&\mapsto zM_{\chi}T_g.
    \end{align*}
\end{thm}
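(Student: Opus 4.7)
The theorem has two distinct parts: that $U$ is a strongly continuous unitary representation of $H(G)$, and that this representation is irreducible.

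For the representation part, each $U(g,\chi,z) = zM_\chi T_g$ is unitary because $M_\chi$ and $T_g$ are unitary on $L^2(G)$ (the former since $|\chi(g')|=1$ pointwise, the latter by translation invariance of $\mu_G$) and $|z|=1$. The homomorphism property reduces to the single commutation relation
\begin{equation*}
    T_g M_{\chi'} = \overline{\chi'(g)}\, M_{\chi'} T_g,
\end{equation*}
which is a direct computation: $(T_gM_{\chi'}\psi)(g'')=\chi'(g''-g)\psi(g''-g)=\overline{\chi'(g)}\,(M_{\chi'}T_g\psi)(g'')$. Substituting in $U(g,\chi,z)U(g',\chi',z')$ produces exactly the phase factor $\overline{\chi'(g)}$ appearing in the group law \eqref{group law}. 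Strong continuity reduces, by density of $\mathcal{S}(G)$ in $L^2(G)$, to continuity of $g\mapsto T_g\psi$ and $\chi \mapsto M_\chi \psi$ for Schwartz-Bruhat $\psi$, which are standard facts for SCLCA groups (the former uses continuity of translation in $L^p$, the latter uses joint continuity of $(\chi,g')\mapsto \chi(g')$ and dominated convergence on a compact support of $|\psi|^2$).

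For irreducibility, I would apply Schur's lemma. Let $A\in B(L^2(G))$ commute with every $U(g,\chi,z)$. From commutation with the multiplicative subgroup of pure phases I get no information, so the content lies in commutation with all $M_\chi$ and all $T_g$. Commutation with $\{M_\chi\}_{\chi\in\widehat{G}}$ places $A$ in the commutant of the von Neumann algebra $\mathcal{M}$ generated by the $M_\chi$. I would then identify $\mathcal{M}$ with the maximal abelian algebra of all multiplication operators $\{M_f : f\in L^\infty(G)\}$: conjugation by $\mathcal{F}:L^2(G)\to L^2(\widehat{G})$ turns each $M_\chi$ into translation by $\chi$ on $L^2(\widehat{G})$, so via Pontryagin duality $\mathcal{F}\mathcal{M}\mathcal{F}^{-1}$ is the von Neumann algebra generated by translations on $L^2(\widehat{G})$, which is well known to be the full multiplication algebra $L^\infty(G)$ viewed as operators on $L^2(\widehat{G})$ through Fourier inversion. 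Hence $A = M_f$ for some $f\in L^\infty(G)$. Commutation with every $T_g$ then gives $T_gM_fT_{-g}=M_f$, i.e. $f(\cdot-g)=f(\cdot)$ in $L^\infty(G)$ for each fixed $g$. A Fubini argument on $G\times G$ (valid thanks to $\sigma$-finiteness, i.e. second countability) upgrades this to $f$ being constant $\mu_G$-almost everywhere, so $A$ is a scalar.

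The main obstacle is the second step, the identification of the commutant of $\{M_\chi\}$ with the multiplication algebra: it is the only point where second countability of $G$ is used in an essential way, since it guarantees $\sigma$-finiteness of $\mu_{\widehat{G}}$ and hence the classical fact that multiplication by $L^\infty$ is maximal abelian. The first and third steps are quick once this is in place. Alternatively, one can avoid the commutant identification by arguing directly that the orbit $U(H(G))\psi_0$ of any nonzero vector $\psi_0$ spans a dense subspace---this reduces, via Fourier on the second variable, to the injectivity of the short-time Fourier / ambiguity transform, but the cleanest route remains the commutant argument.
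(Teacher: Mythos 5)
Your proof is correct in substance; note that the paper itself does not prove this theorem at all --- it simply defers to the reference cited as \autocite{Stone_VonNeumann_Mackey} --- so what you have written is a self-contained version of the standard argument rather than a different route from the paper. Your three steps are the classical ones: the commutation relation $T_gM_{\chi}=\overline{\chi(g)}M_{\chi}T_g$ reproducing the cocycle in the group law \eqref{group law}; the identification of the von Neumann algebra generated by $\{M_{\chi}\}_{\chi\in\widehat{G}}$ with the full multiplication algebra $\{M_f:f\in L^{\infty}(G)\}$ (equivalently, via Hahn--Banach and injectivity of the Fourier transform on $L^1(G)$, the weak-$*$ closed span of the characters is all of $L^{\infty}(G)$, which avoids routing through the group von Neumann algebra of $\widehat{G}$ if you prefer), maximal abelianness then forcing any commuting $A$ to be some $M_f$; and the Fubini argument showing a translation-invariant $f\in L^{\infty}(G)$ is a.e.\ constant, where second countability indeed supplies $\sigma$-compactness and hence $\sigma$-finiteness. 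Two small points to tidy up: strong continuity is not actually part of the stated theorem, and in your continuity argument Schwartz--Bruhat functions need not be compactly supported when $G$ has an $\mathbb{R}^n$ factor, so either use density of $C_c(G)$ or split off an $\varepsilon$-tail of $\int|\psi|^2$ before invoking uniform convergence of $\chi_i\to\chi$ on a compact set; neither affects the validity of the argument.
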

The proof of this theorem can be found in \autocite{Stone_VonNeumann_Mackey}. This action encodes the translations in position space ($G$) and in momentum space ($\hat{G}$).
It extends to an action on $\mathcal{S}'(G)$, by the duality
\begin{equation*}
    \langle U(g,\chi,z)\psi,\phi\rangle=\langle \psi,U(-g,\overline{\chi},\overline{z})\phi\rangle
\end{equation*}
for $\psi\in \mathcal{S}'(G)$ and $\phi\in \mathcal{S}(G)$.
From \eqref{group law}, we have
\begin{equation}
    \label{unitary inverse}
    U(g,\chi,z)^{-1}=U(g,\chi,z)^*=U(-g,\overline{\chi},\overline{z\chi(g)}).
\end{equation}
The action of the Weyl-Heisenberg group  acts as a translation on the \KD distribution, in particular it preserves properties such has positiveness. Indeed for any $\psi\in \mathcal{S}'(G)$ and any $(g_0,\chi_0,z)\in H(G)$,
\begin{equation}
\label{KD positivity is preserved by W-H}
    \mathrm{KD}_{U(g_0,\chi_0,z)\psi}(g,\chi)=\mathrm{KD}_{\psi}(g-g_0,\chi\overline{\chi_0}),
\end{equation}
where this equation is to be understood in the sense of distributions. Similarly for any generalized operator $A$
\begin{equation}
    \mathrm{KD}_{U(g_0,\chi_0,z)A U(g_0,\chi_0,z)^{*}}(g,\chi)=\mathrm{KD}_{A}(g-g_0,\chi\overline{\chi_0}).
\end{equation}

For mathematical simplicity and until the end of this section, we restrict ourselves to Hilbert-Schmidt operators, and we will come back to generalized operators in the next section.
Inspired by the common methods employed on $\mathbb{R}^n$, we introduce the characteristic function of any Hilbert-Schmidt operator $A$:

\begin{definition}
    Let $A$ be a Hilbert-Schmidt operator on $L^2(G)$. The \Def{characteristic function} of $A$ is the function defined on $G\times \widehat{G}$ by
    \begin{equation}
    \label{def characteristic}
        \mathfrak{X}^0_{A}(g,\chi)=\tr(AU(g,\chi,1)).
    \end{equation}
\end{definition}

Note that \eqref{def characteristic} is a priori well-defined only for trace class operators. However, the following lemma shows that $\mathfrak{X}^0$ extends to a unitary map between $\mathcal{T}_2(G)$ and $\mathrm{Ran}(\mathfrak{X}^0)\subset L^2(G\times G)$:

\begin{lemma}
    Let $A,B\in \mathcal{T}_2(G)$ such that $k_A,k_B\in \mathcal{S}(G\times G)$. Then
    \begin{equation*}
        \tr(A^*B)=\langle \mathfrak{X}^0_A,\mathfrak{X}^0_B\rangle_{G\times \widehat{G}}.
    \end{equation*}
    As a consequence, $\mathfrak{X}^0$ is a well-defined unitary map between $\mathcal{T}_2(G)$ and $\mathrm{Ran}(\mathfrak{X}^0)$.
\end{lemma}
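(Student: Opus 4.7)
The plan is to reduce the identity to Plancherel on $\widehat{G}$ applied to a partial Fourier transform of the kernel, and then to close the argument by a density argument using the unitary kernel identification $\mathcal{T}_2(G)\simeq L^2(G\times G)$.

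First I would compute the distributional kernel of $AU(g,\chi,1)=AM_\chi T_g$. Starting from $(M_\chi T_g\psi)(y)=\chi(y)\psi(y-g)$ and changing variable $y\mapsto y+g$ in the resulting integral, one gets
\begin{equation*}
    k_{AU(g,\chi,1)}(x,y)=k_A(x,y+g)\,\chi(y+g).
\end{equation*}
When $k_A\in\mathcal{S}(G\times G)$, this kernel is again Schwartz-Bruhat on $G\times G$, so $AU(g,\chi,1)$ is trace class and the trace is given by the diagonal integral. After the change of variable $y=x+g$ this yields
\begin{equation*}
    \mathfrak{X}^0_A(g,\chi)=\int_G k_A(y-g,y)\,\chi(y)\,d\mu_G(y).
\end{equation*}
This shows that $\chi\mapsto \mathfrak{X}^0_A(g,\chi)$ is, for each fixed $g$, a Fourier transform (with the character convention of \eqref{Fourier transform} applied to $\overline{\chi}$) of the Schwartz function $y\mapsto k_A(y-g,y)$.

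Next I would apply Plancherel in the $\chi$ variable at fixed $g$:
\begin{equation*}
    \int_{\widehat{G}}\overline{\mathfrak{X}^0_A(g,\chi)}\,\mathfrak{X}^0_B(g,\chi)\,d\mu_{\widehat{G}}(\chi)=\int_G\overline{k_A(y-g,y)}\,k_B(y-g,y)\,d\mu_G(y).
\end{equation*}
Integrating in $g$, using Fubini (justified by Schwartz decay) and the change of variable $x=y-g$ together with translation invariance of $\mu_G$, one recognizes the $L^2(G\times G)$ inner product of the kernels:
\begin{equation*}
    \langle \mathfrak{X}^0_A,\mathfrak{X}^0_B\rangle_{G\times\widehat{G}}=\int_{G\times G}\overline{k_A(x,y)}\,k_B(x,y)\,d\mu_G(x)\,d\mu_G(y)=\langle k_A,k_B\rangle_{L^2(G\times G)}=\tr(A^*B),
\end{equation*}
where the last identity uses the unitary correspondence $A\leftrightarrow k_A$ recalled in Section~2.1.

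Finally I would upgrade this isometric identity on the dense subspace $\{A\in\mathcal{T}_2(G):k_A\in\mathcal{S}(G\times G)\}$ to a unitary on all of $\mathcal{T}_2(G)$. Since $\mathcal{S}(G\times G)$ is dense in $L^2(G\times G)$ and the kernel map is unitary, this subspace is indeed dense in $\mathcal{T}_2(G)$. The identity just proved shows $\mathfrak{X}^0$ is isometric there, so it extends uniquely by continuity to an isometry $\mathfrak{X}^0:\mathcal{T}_2(G)\to L^2(G\times\widehat{G})$, and by definition of its range this extension is a unitary onto $\mathrm{Ran}(\mathfrak{X}^0)$. The only delicate point is the trace formula $\tr(AU(g,\chi,1))=\int_G k_A(x,x+g)\chi(x+g)\,d\mu_G(x)$, which needs a careful check that Schwartz-Bruhat kernels on SCLCA groups define trace class operators whose trace is computed by the diagonal integral; once this is granted, the rest is Plancherel and a change of variables.
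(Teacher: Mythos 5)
Your proof is correct and takes essentially the same route as the paper's: you compute the kernel of $AU(g,\chi,1)$, identify $\mathfrak{X}^0_A(g,\cdot)$ as a Fourier transform of the shifted-diagonal restriction of $k_A$, and apply Plancherel plus a change of variables (the paper performs the same cancellation through the distributional identity $\int_{\widehat{G}}\chi(g''-g')d\mu_{\widehat{G}}(\chi)=\delta_0(g''-g')$ inside one four-fold integral), before extending by density of Schwartz-Bruhat kernels in $L^2(G\times G)$. The delicate point you flag at the end (that a Schwartz-Bruhat kernel gives a trace class operator whose trace is the diagonal integral) is also taken for granted in the paper's own computation.
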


\begin{proof}
    It is easily checked that the kernel of the operator $AU(g,\chi)$ is given by $(g',g'')\mapsto \chi(g''+g)k_A(g',g''+g)$ and similarly for $B$. We compute
    \begin{align*}
        &\langle \mathfrak{X}^0_A,\mathfrak{X}^0_B\rangle_{G\times \widehat{G}}\\
        =&\int_{G\times \widehat{G}}\overline{\tr(AU(g,\chi,1))}\tr(BU(g,\chi,1))d\mu_G(g)d\mu_{\widehat{G}}(\chi)\\
        =&\int_{G^3\times \widehat{G}}\overline{k_A(g',g'+g)}k_B(g'',g''+g)\chi(g''-g')d\mu_{\widehat{G}}(\chi)d\mu_G(g)d\mu_G(g')d\mu_G(g'')\\
        =&\int_{G^2}\overline{k_A(g',g'+g)}k_B(g',g'+g)d\mu_G(g)d\mu_G(g')\\
        =&\int_{G^2}\overline{k_A(g',g)}k_B(g',g)d\mu_G(g)d\mu_G(g')\\
        =&\tr(A^*B).
    \end{align*}
    By density of $\mathcal{S}(G\times G)$ in $L^2(G\times G)$, $\mathfrak{X}^0$ can indeed be extended to a unitary map between $\mathcal{T}_2(G)$ and $\mathrm{Ran}(\mathfrak{X}^0)$.
\end{proof}
One could have considered the action of the Weyl-Heisenberg group given by the operators $T_gM_{\chi}=\overline{\chi(g)}M_{\chi}T_g$. 
It is therefore also natural to consider the function 
\begin{equation}
    \mathfrak{X}^1_A(g,\chi)=\tr\left( AT_gM_{\chi}\right)=\mathfrak{X}^0_A(g,\chi)\overline{\chi(g)}.
\end{equation}
Usually over $\mathbb{R}$, the characteristic function of the Hilbert-Schmidt operator $A$ is defined as
\begin{equation*}
    (x,p)\mapsto\tr(Ae^{i(pX-xP)}),
\end{equation*}
where $X$ and $P$ are position and momentum operators \autocite{Cahill_Glauber}. Since $[X,P]=i$, we have from the Baker-Campbell-Hausdorff formula $e^{i(pX-xP)}=e^{ipX}e^{-ixP}e^{-i\frac{px}{2}}$. In our notation, this corresponds to the unitary operator $U(x,p,e^{-i\frac{xp}{2}})$. There is no equivalent in the general setting. However, if we assume that the doubling map $g\mapsto g+g$ is invertible, and that we denote it's inverse by $g\mapsto \frac{g}{2}$, we can define for any Hilbert-Schmidt operator $A$
\begin{equation}
    \mathfrak{X}^{\frac{1}{2}}_A(g,\chi)=\tr\left(A U\left(g,\chi,\overline{\chi\left(\frac{g}{2}\right)}\right)\right)=\mathfrak{X}^0_A(g,\chi)\overline{\chi\left(\frac{g}{2}\right)}.
\end{equation}
When it is well-defined, $\mathfrak{X}^{\frac{1}{2}}$ can be thought of as an interpolation between $\mathfrak{X}^0$ and $\mathfrak{X}^1$. Moreover, the Wigner-Weyl distribution $\mathrm{W}_A$ of the Hilbert-Schmidt operator $A$ is then defined has the symplectic Fourier transform of $\mathfrak{X}^{\frac{1}{2}}_A$, see \autocite{Crann}. We recall the definition of the symplectic Fourier transform:

\begin{definition}
    Let $G$ be a \gls{sclca} group. The \Def{symplectic Fourier transform} is the unitary map $\mathcal{F}_{\mathrm{symp}}:L^2(G\times \widehat{G})\to L^2(G\times \widehat{G})$ defined by
    \begin{equation}
        \mathcal{F}_{\mathrm{symp}}(f)(g,\chi)=\int_{G\times \widehat{G}}f(g',\chi')\chi(g')\overline{\chi'(g)}d\mu_G(g')d\mu_{\widehat{G}}(\chi'),
    \end{equation}
    for any $f\in L^2(G\times \widehat{G})$.
\end{definition}

Over $\mathbb{R}$, it was shown in \autocite{OConell} that taking the symplectic Fourier transform of $\mathfrak{X}^1_A$ returns the \KD distribution of $A$. We prove the following generalization to \gls{sclca} groups, needed in section \ref{Identifying classical fragment}.

\begin{thm}[O'Conell formula]
    Let $A$ be a Hilbert-Schmidt operator on $L^2(G)$. Its \KD distribution and characteristic function are related as follows:
    \begin{equation}
    \label{KD and characteristic}
        \mathrm{KD}_{A}=\mathcal{F}_{\mathrm{symp}}(\mathfrak{X}_{A}^1).
    \end{equation}
\end{thm}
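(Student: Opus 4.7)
The plan is to reduce the identity to a direct computation on the dense subspace of operators $A$ whose kernel $k_A$ lies in $\mathcal{S}(G\times G)$, where everything is classically defined, and then extend by density. Both sides of \eqref{KD and characteristic} are, by the lemma preceding the theorem together with \eqref{KD unitary} and the unitarity of $\mathcal{F}_{\mathrm{symp}}$, continuous maps $\mathcal{T}_2(G)\to L^2(G\times\widehat{G})$; it therefore suffices to verify the formula on a dense subset of $\mathcal{T}_2(G)$. The natural choice is $\{A\in\mathcal{T}_2(G):k_A\in \mathcal{S}(G\times G)\}$, since $\mathcal{S}(G\times G)$ is dense in $L^2(G\times G)\simeq \mathcal{T}_2(G)$.

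For such $A$, the first step is to express the characteristic function in terms of the kernel. A direct computation shows that the kernel of $AT_gM_\chi$ is $(g_1,g_3)\mapsto k_A(g_1,g_3+g)\chi(g_3)$, so the operator $AT_gM_\chi$ is trace class with
\begin{equation*}
\mathfrak{X}^1_A(g,\chi)=\tr(AT_gM_\chi)=\int_G k_A(g_1,g_1+g)\chi(g_1)\,d\mu_G(g_1).
\end{equation*}
This is a Schwartz-Bruhat function on $G\times\widehat{G}$, so applying $\mathcal{F}_{\mathrm{symp}}$ yields
\begin{equation*}
\mathcal{F}_{\mathrm{symp}}(\mathfrak{X}^1_A)(g,\chi)=\int_{G^2\times\widehat{G}}k_A(g_1,g_1+g')\chi'(g_1)\chi(g')\overline{\chi'(g)}\,d\mu_G(g_1)\,d\mu_G(g')\,d\mu_{\widehat{G}}(\chi').
\end{equation*}

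The second step is to perform the $\chi'$-integration using the distributional identity $\int_{\widehat{G}}\chi'(g_1-g)\,d\mu_{\widehat{G}}(\chi')=\delta_0(g_1-g)$ recalled in section~2.1. Fubini applies on Schwartz-Bruhat functions, and integrating against $\delta_0(g_1-g)$ forces $g_1=g$, leaving
\begin{equation*}
\mathcal{F}_{\mathrm{symp}}(\mathfrak{X}^1_A)(g,\chi)=\int_G k_A(g,g+g')\chi(g')\,d\mu_G(g').
\end{equation*}
A translation change of variable $g''=g+g'$ then gives
\begin{equation*}
\mathcal{F}_{\mathrm{symp}}(\mathfrak{X}^1_A)(g,\chi)=\overline{\chi(g)}\int_G k_A(g,g'')\chi(g'')\,d\mu_G(g'')=\mathrm{KD}_A(g,\chi),
\end{equation*}
which establishes the equality on the dense subset.

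The main obstacle is the justification of these manipulations for a general Hilbert-Schmidt $A$, for which neither $AT_gM_\chi$ need be trace class nor the iterated integrals absolutely convergent. This is precisely why we first work on Schwartz-Bruhat kernels and then invoke the continuity of both sides. Concretely, if $(A_n)\subset \mathcal{T}_2(G)$ has kernels in $\mathcal{S}(G\times G)$ and converges to $A$ in Hilbert-Schmidt norm, then $\mathrm{KD}_{A_n}\to \mathrm{KD}_A$ in $L^2(G\times\widehat{G})$ by \eqref{KD unitary}, while the preceding lemma gives that $\mathfrak{X}^1_{A_n}\to \mathfrak{X}^1_A$ in $L^2(G\times\widehat{G})$ (the phase factor $\overline{\chi(g)}$ relating $\mathfrak{X}^0$ and $\mathfrak{X}^1$ is unimodular and does not affect the $L^2$ norm), and finally unitarity of $\mathcal{F}_{\mathrm{symp}}$ gives $\mathcal{F}_{\mathrm{symp}}(\mathfrak{X}^1_{A_n})\to \mathcal{F}_{\mathrm{symp}}(\mathfrak{X}^1_A)$. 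Passing to the limit in the identity proved on the dense subspace yields the theorem in full generality.
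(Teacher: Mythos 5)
Your proposal is correct and follows essentially the same route as the paper: compute $\mathfrak{X}^1_A$ from the kernel when $k_A\in\mathcal{S}(G\times G)$, perform the $\chi'$-integration via the distributional Fourier-inversion identity to collapse to $\mathrm{KD}_A$, and extend to all of $\mathcal{T}_2(G)$ by continuity/unitarity of both sides. The only cosmetic difference is that you compute the kernel of $AT_gM_\chi$ directly, whereas the paper computes $\mathfrak{X}^0_A$ from $AU(g,\chi,1)$ and then inserts the phase $\overline{\chi(g)}$; the computations coincide.
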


\begin{proof}
    Assume as before that $k_{A}\in \mathcal{S}(G\times G)$. Then
    \begin{align*}
        \mathfrak{X}^0_{A}(g',\chi')=\tr(A U(g',\chi'))=\int_Gk_{A}(g'',g''+g')\chi'(g'')\chi'(g')d\mu_G(g'').
    \end{align*}
    We compute
    \begin{align*}
        &\mathcal{F}_{\mathrm{symp}}(\mathfrak{X}^1_A)(g,\chi)\\
        =&\int_{G\times \widehat{G}}\mathfrak{X}^0_{A}(g',\chi')\overline{\chi'(g')}\chi(g')\overline{\chi'(g)}d\mu_G(g')d\mu_{\widehat{G}}(\chi')\\
        =&\int_{G^2\times \widehat{G}}k_{A}(g'',g''+g')\chi'(g'')\chi(g')\overline{\chi'(g)}d\mu_G(g')d\mu_{\widehat{G}}(\chi')d\mu_G(g'')\\
        =&\int_{G^2}k_{A}(g'',g''+g')\delta(g-g'')\chi(g')d\mu_G(g')d\mu_G(g'')\\
        =&\int_Gk_{A}(g,g+g')\chi(g')d\mu_G(g')\\
        =&\int_Gk_{A}(g,g')\chi(g'-g)d\mu_G(g')\\
        =&\mathrm{KD}_{A}(g,\chi),
    \end{align*}
    where we used several times the invariance of the Haar measure. The right and left hand sides of \eqref{KD and characteristic} are unitary maps in the variable $A$, so the fact that it holds if $k_{A}\in \mathcal{S}(G\times G)$ implies by density that it holds for any $A\in \mathcal{T}_2(G)$.
\end{proof}

\begin{rmk}
    It also follows from O'Conell's formula that $\mathrm{Ran}(\mathfrak{X}^0)=L^2(G\times G)$. Indeed,
    \begin{equation*}
        \mathfrak{X}^0_{A}(g,\chi)=\chi(g)\times\mathcal{F}_{\mathrm{symp}}^{-1}(\mathrm{KD}_{A})(g,\chi),
    \end{equation*}
    and the right hand side is surjective from $\mathcal{T}_2(G)$ to $L^2(G\times G)$, as a function of $A$.
\end{rmk}

It is easy to see that
\begin{equation*}
    \overline{\mathrm{KD}_{A^*}}=\overline{\mathcal{F}_{\mathrm{symp}}(\mathfrak{X}^1_{A^*})}=\mathcal{F}_{\mathrm{symp}}\left(\overline{(\mathfrak{X}^1_{A^*})^-}\right),
\end{equation*}
where for all $(g,\chi)\in G\times\widehat{G}$
\begin{equation*}
    \overline{(\mathfrak{X}^1_{A^*})^-}(g,\chi)=\overline{(\mathfrak{X}^1_{A^*})(-g,\overline{\chi})}=\chi(g)\tr(A U(-g,\overline{\chi},1)^*)=\mathfrak{X}^0_A(g,\chi)
\end{equation*}
where we used \eqref{unitary inverse}. Therefore 
\begin{equation}
    \label{KD conjugate}
    \overline{\mathrm{KD}_{A^*}}=\mathcal{F}_{\mathrm{symp}}(\mathfrak{X}^0_A):=\mathrm{aKD}_A
\end{equation}
for any Hilbert-Schmidt operator $A$. We will call the distribution $\mathrm{aKD}$ the anti-KD distribution. Note that if $A$ is self-adjoint, then $\mathrm{aKD}_A=\overline{\mathrm{KD}_A}$. Altogether, we have
\begin{align*}
    \mathrm{KD}_A&=\mathcal{F}_{\mathrm{symp}}(\mathfrak{X}^1_A),\\
    \mathrm{W}_A&=\mathcal{F}_{\mathrm{symp}}(\mathfrak{X}^{\frac{1}{2}}_A), \text{ when it exists,}\\
    \mathrm{aKD}_A&=\mathcal{F}_{\mathrm{symp}}(\mathfrak{X}^0_A).
\end{align*}
Thus the Wigner function, when it exists, can be interpreted as an interpolation between the \KD distribution and the anti-KD distribution.

 As already mentioned in \eqref{KD unitary}, the \KD distribution is a unitary map between the Hilbert-Schmidt operators of $L^2(G)$ and the space $L^2(G\times \widehat{G})$ and its inverse is given by \eqref{inverse KD}. Let $f\in L^2(G)$ and $h\in L^2(\widehat{G})$. Consider the operators $\mathbf{f}_G$ and $\mathbf{h}_{\widehat{G}}$ on $L^2(G)$ defined by
\begin{align*}
    &(\mathbf{f}_G\psi)(g)=f(g)\psi(g)\\
    &\left(\mathcal{F}\mathbf{h}_{\widehat{G}}\mathcal{F}^{-1}\hat{\psi}\right)(\chi)=h(\chi)\hat{\psi}(\chi),
\end{align*}
So that $\mathbf{f}_G$ is the multiplication operator by $f$ and the conjugation of $\mathbf{h}_{\widehat{G}}$ by the Fourier transform is the multiplication operator by $h$.
It is easily checked that the operator
\begin{equation}
    \mathbf{f}_G\mathbf{h}_{\widehat{G}}
\end{equation}
is a Hilbert-Schmidt operator on $L^2(G)$ whose kernel is given by
\begin{equation*}
    k(g,g')=f(g)(\mathcal{F}^{-1}h)(g-g').
\end{equation*}
It follows
\begin{align*}
    \mathrm{KD}_{\mathbf{f}_G\mathbf{h}_{\widehat{G}}}(g,\chi)&=\overline{\chi(g)}\int_Gk(g,g')\chi(g')d\mu_G(g')\\
    &=\int_Gf(g)(\mathcal{F}^{-1}h)(g-g')\overline{\chi(g-g')}d\mu_G(g')\\
    &=f(g)h(\chi)
\end{align*}
for any $(g,\chi)\in G\times \widehat{G}$, and thus
\begin{equation}
    \mathrm{KD}^{-1}(f\otimes h)=\mathbf{f}_G\mathbf{h}_{\widehat{G}}.
\end{equation}
This expression matches the definition of the Kohn-Nirenberg quantization on \gls{sclca} groups given in \autocite{Kohn_Nirenberg_quantization} and in \autocite{FeichtingerKozek98}.
Note that we have
\begin{equation}
    f(g)h(\chi)=h(\chi)f(g)
\end{equation}
for all $(g,\chi)\in G\times \widehat{G}$, while
\begin{equation}
    \mathbf{f}_G\mathbf{h}_{\widehat{G}}\neq \mathbf{h}_{\widehat{G}} \mathbf{f}_G
\end{equation}
in general.
The inverse of the Kirkwood-Dirac distribution therefore gives a choice of ordering. It quantizes functions on phase space by placing functions of the variable $g$ ``to the left'' and functions of the variable $\chi$ ``to the right''. When $G=\mathbb{R}^n$, this corresponds to the standard ordering, when functions of the position operators $X_i$ are to the left while functions of the momentum operator $P_i$ are to the right. Taking the inverse of the anti-KD distribution would yield the anti-standard ordering instead.

Still in the case $G=\mathbb{R}^n$, it is well known that the Weyl quantization (the inverse of the Wigner function), acts as a completely symmetric quantization, mapping for example (when $n=1$) the function $(x,p)\mapsto xp$ to the self-adjoint operator $\frac{1}{2}(XP+PX)$. It interpolates between the normal ordering (annihilation operator to the left), which is related to the Glauber-Sudarshan $P$ function, and anti-normal ordering (creation operator to the left), which is related to the Husimi $Q$ function \autocite{Cahill_Glauber}. It also interpolates between the standard ordering, which corresponds to the \KD distribution and anti-standard ordering, which corresponds to the anti-KD distribution \autocite{OConell_Wang}.

Summing up the above discussion, for any \gls{sclca} group $G$ having an invertible doubling map, the Wigner-Weyl distibution is well-defined and interpolates between the \KD distribution and the anti-KD distribution. In that sense, the inverse of the Wigner-Weyl distribution defined in \autocite{Crann} can be viewed as a symmetric quantization of the phase space $G\times \widehat{G}$, even if the position and momentum operators don't necessarily exist.

\section{Characterization of the Kirkwood-Dirac positive generalized pure states on SCLCA groups}
\label{Main result}

In this section we provide a complete characterization of the generalized pure states on $G$ that have a positive \KD distribution. We begin by constructing a family of KD-positive generalized pure states using the Poisson-Tate formula associated with closed subgroups and the action of the Weyl-Heisenberg group. We then show that this family in fact contains all the KD-positive generalized pure states. This yields the proof of Theorem \ref{KD positive on second countable LCA groups}.

\subsection{Poisson-Tate formula}

In this subsection, we construct a family of generalized pure states having a positive \KD distribution.

The form of equation \eqref{KD pure} implies that $\mathrm{KD}_{\psi}$ is a positive measure only if both $\psi$ and $\hat{\psi}$ are a minima tempered complex measures. We recall that a tempered measure is a measure which is also a tempered distribution. This motivates the following definition:

\begin{definition}
    Let $G$ be a \gls{sclca} group. A \Def{Fourier measure} on $G$ is a tempered complex measure $\psi$ such that $\hat{\psi}$ is also a tempered complex measure.
\end{definition}
As examples of Fourier measures, one finds the square integrable functions and the finite measures. A family of Fourier measures that will be very interesting for us is given by the generalized Poisson formula, or Poisson-Tate formula, see for example \autocite{Poisson_formula}:

\begin{thm}[Poisson-Tate formula]
Let $G$ be a \gls{lca} group, $H\subset G$ be a closed subgroup and $\mu_H$ be a Haar measure on $H$. Then $\mu_H$ is a Fourier measure on $G$ and its Fourier transform $\hat{\mu}_H$ is a Haar measure on $H^{\perp}$. In other words, there exists a scaling of the Haar measure $\mu_{H^{\perp}}$ on $H^{\perp}$ such that for any $f\in \mathcal{S}(G)$, we have
\begin{equation}
    \int_Hfd\mu_H=\int_{H^{\perp}}\hat{f}d\mu_{H^{\perp}}.
\end{equation}
\end{thm}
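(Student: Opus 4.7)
The approach is to derive the identity by combining Weil's integration formula for $G$ with respect to the closed subgroup $H$ with the Fourier inversion formula on $G/H$, using the canonical isomorphism $H^\perp\simeq\widehat{G/H}$ recalled in the preliminaries.

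For $f\in\mathcal{S}(G)$, I would define the $H$-averaging map $F:G/H\to\mathbb{C}$ by $F([g])=\int_H f(g+h)\,d\mu_H(h)$. Invariance of $\mu_H$ under $H$ makes $F$ well-defined on the quotient, and the standard Bruhat-theoretic fact that averaging $\mathcal{S}(G)$-functions along a closed subgroup produces a function in $\mathcal{S}(G/H)$ gives $F\in\mathcal{S}(G/H)$. This verification, which can be carried out using the structure theorem~\eqref{structure theorem} and Osborne's characterization of Schwartz-Bruhat functions, is the main technical ingredient and is where I would expect most of the work; in particular, it is what ensures that $\mu_H$, extended to $G$ by $f\mapsto\int_H f|_H\,d\mu_H$, is a tempered distribution (hence a tempered measure) on $G$.

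Next, I normalize the Haar measure $\mu_{G/H}$ so that Weil's integration formula $\int_G f\,d\mu_G=\int_{G/H} F\,d\mu_{G/H}$ holds for every $f\in\mathcal{S}(G)$. Using the canonical isomorphism $H^\perp\simeq\widehat{G/H}$, for $\chi\in H^\perp$ I compute
\begin{equation*}
\hat{F}(\chi)=\int_{G/H}F([g])\overline{\chi([g])}\,d\mu_{G/H}([g])=\int_G f(g)\overline{\chi(g)}\,d\mu_G(g)=\hat{f}(\chi),
\end{equation*}
where I apply Weil's formula after pulling the character $\overline{\chi(g+h)}=\overline{\chi(g)}$ (legitimate since $\chi|_H\equiv 1$) into the inner integral.

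Finally, I apply the Fourier inversion formula on $G/H$ at the identity coset, choosing the Haar measure $\mu_{H^\perp}$ on $H^\perp\simeq\widehat{G/H}$ as the unique scaling for which inversion holds. This gives
\begin{equation*}
\int_H f\,d\mu_H=F([0])=\int_{\widehat{G/H}}\hat{F}(\chi)\,d\mu_{H^\perp}(\chi)=\int_{H^\perp}\hat{f}(\chi)\,d\mu_{H^\perp}(\chi),
\end{equation*}
which is the desired Poisson-Tate identity. The resulting normalization of $\mu_{H^\perp}$ is determined by the previously fixed $\mu_G$, $\mu_{\widehat{G}}$ and $\mu_H$. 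As a consequence, $\hat{\mu}_H=\mu_{H^\perp}$ as tempered distributions on $\widehat{G}$, proving both that $\mu_H$ is a Fourier measure and that its Fourier transform is a Haar measure on $H^\perp$.
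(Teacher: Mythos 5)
Your argument is correct: the paper does not prove this theorem but cites the literature for it, and your proof is essentially the standard one found there — Weil's quotient integration formula, the Bruhat-theoretic fact that $H$-averaging maps $\mathcal{S}(G)$ into $\mathcal{S}(G/H)$, the identification $\hat{F}\vert_{H^{\perp}}=\hat{f}\vert_{H^{\perp}}$ via $H^{\perp}\simeq\widehat{G/H}$, and Fourier inversion on $G/H$ at the identity coset with the dual normalization of $\mu_{H^{\perp}}$. You also correctly identify the averaging lemma as the main technical ingredient and correctly deduce temperedness of $\mu_H$ and the identity $\hat{\mu}_H=\mu_{H^{\perp}}$ in $\mathcal{S}'(\widehat{G})$ from the bijectivity of $\mathcal{F}:\mathcal{S}(G)\to\mathcal{S}(\widehat{G})$.
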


For $H\subset G$ a closed subgroup, we can then explicitly compute the \KD distribution of a Haar measure $\mu_H$:
\begin{equation}
    \mathrm{KD}_{\mu_H}(g,\chi)=\overline{\chi(g)}\mu_H(g)\overline{\hat{\mu}_H(\chi)}=\mu_H(g)\mu_{H^{\perp}}(\chi),
\end{equation}
which is a positive measure on $G\times \widehat{G}$, supported by $H\times H^{\perp}$. This observation combined with the fact that the Weyl-Heisenberg group preserves KD-positivity implies that the generalized pure states associated with the tempered distributions
\begin{equation}
\label{translates of Haar measures}
    \psi^H_{g,\chi,z}=U(g,\chi,z)\mu_H
\end{equation}
are all KD positive, where $(g,\chi,z)\in H(G)$ and $\mu_H$ is a Haar measure on some closed subgroup $H\subset G$. 

\subsection{Proof of Theorem \ref{KD positive on second countable LCA groups}}
The previous subsection proves the direct implication in Theorem \ref{KD positive on second countable LCA groups}. We now prove the indirect implication. For clarity, we state what remains to be proven:

\begin{prop}
\label{indirect implication}
    Let $G$ be a \gls{sclca} group and $\psi$ be a non-zero tempered measure such that the associated generalized pure state $\proj{\psi}$ has a positive \KD distribution. Then there exists a closed subgroup $H\subset G$ such that $\psi$ is, up to the action of the Weyl-Heisenberg group, a Haar measure on $H$.
\end{prop}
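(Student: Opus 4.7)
The strategy is to combine polar decomposition of the complex measures $\psi$ and $\hat{\psi}$ with a support-level analysis driven by the positivity of $\mathrm{KD}_\psi$. Writing $\psi = u\,|\psi|$ and $\hat{\psi} = v\,|\hat{\psi}|$ with measurable unimodular functions $u,v$, and setting $S := \mathrm{supp}|\psi|$ and $T := \mathrm{supp}|\hat{\psi}|$, the expression \eqref{KD pure} becomes $\mathrm{KD}_\psi = \overline{\chi(g)}\,u(g)\,\overline{v(\chi)}\cdot(|\psi|\otimes|\hat{\psi}|)$. Since the right-hand side is assumed to be a positive Radon measure while its scalar prefactor is unimodular, this prefactor must equal $1$ for $(|\psi|\otimes|\hat{\psi}|)$-almost every $(g,\chi)$. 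I would first use the Weyl-Heisenberg covariance \eqref{KD positivity is preserved by W-H}, which preserves KD-positivity, to translate by an element of $S$ and modulate by an element of $T^{-1}$, so that after this reduction $0\in S$ and $1\in T$.

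The heart of the argument is to upgrade the almost-everywhere identity to a statement about the closed supports. Fix $\chi_1,\chi_2\in T$. By Fubini --- available because second countability makes $|\psi|$ and $|\hat\psi|$ $\sigma$-finite --- for $|\psi|$-almost every $g\in S$ one has $u(g) = \chi_1(g)v(\chi_1) = \chi_2(g)v(\chi_2)$. Hence the continuous function $\chi_1\chi_2^{-1}$ equals the constant $v(\chi_2)\overline{v(\chi_1)}$ on a Borel subset of $S$ of full $|\psi|$-measure. Because this agreement set is closed (as the level set of a continuous map) and a closed set of full measure contains the support of a Radon measure, we get $\chi_1\chi_2^{-1}|_S \equiv v(\chi_2)\overline{v(\chi_1)}$ identically on $S$. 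Evaluating at $0\in S$ yields $\chi_1|_S = \chi_2|_S$, and choosing $\chi_2=1$ shows $\chi|_S\equiv 1$ for every $\chi\in T$, i.e.\ $T\subseteq S^{\perp}$. Setting $H := \overline{\langle S\rangle}$, the closed subgroup generated by $S$, one has $S^{\perp} = H^{\perp}$, so $\mathrm{supp}(\hat{\psi})\subseteq H^{\perp}$. The same pointwise equation with $\chi=1$ also gives $u|_S \equiv v(1)$, and absorbing this global phase into the central $\mathbb{S}^1$ factor of a Weyl-Heisenberg element makes $\psi$ a positive measure supported in $H$.

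It remains to identify $\psi$ as a Haar measure on $H$. Since $\mathrm{supp}(\hat\psi)\subseteq H^{\perp}$, for every $h\in H$ the Fourier transform of $T_h\psi$, namely $\chi\mapsto\overline{\chi(h)}\hat\psi(\chi)$, coincides with $\hat\psi$ (because $\chi(h)=1$ on $H^{\perp}$); by injectivity of the Fourier transform on $\mathcal{S}'(G)$, $T_h\psi = \psi$, so $\psi$ is $H$-invariant. Combined with $\mathrm{supp}(\psi)\subseteq H$ and positivity, uniqueness of Haar measure on the closed subgroup $H$ forces $\psi$ to be a positive multiple of $\mu_H$, completing the proof. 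The principal obstacle is the rigorous transfer from $(|\psi|\otimes|\hat\psi|)$-almost everywhere identities to genuinely pointwise identities on the closed supports $S$ and $T$, which is exactly where second countability is essential --- for Fubini, for a globally defined polar decomposition of the complex measures, and for the support-containment principle --- and is precisely what makes the present proof substantially more delicate than the finite abelian analogue of \autocite{KD_finite_groups}.
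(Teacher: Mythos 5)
Your overall architecture (polar decomposition, upgrading a.e.\ identities to identities on the closed supports, then $H$-invariance plus uniqueness of Haar measure, which is exactly the paper's ``weak uncertainty principle'' lemma) is close in spirit to the paper's proof, and your first reduction and final Haar-measure step are fine. The genuine gap is in the middle, where you ``fix $\chi_1,\chi_2\in T$'' and, crucially, later take $\chi_2=1$. Fubini only gives: for $|\hat\psi|$-\emph{almost every} $\chi$, the identity $u(g)=\chi(g)v(\chi)$ holds for $|\psi|$-a.e.\ $g$. It does not give the slice identity for \emph{every} $\chi\in T$; if $|\hat\psi|$ has a nonatomic part, any individual point of $T$ --- including the point $1$ you arranged by modulation --- may lie in the exceptional null set, and moreover $v$ is only defined up to $|\hat\psi|$-null sets, so the quantity $v(\chi_1)$ at a fixed point is not even well defined. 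Your closed-agreement-set upgrade works only in the $g$-variable, because $\chi_1\overline{\chi_2}$ is continuous in $g$; there is no corresponding continuity in $\chi$ available to reach specific points of $T$, precisely because the phase $v$ is merely measurable. This is exactly the obstacle to which the paper devotes the first (and longest) part of its proof: by a limiting argument over shrinking sets $A_n$ near $0\in\mathrm{supp}(\psi)$ and increasing balls $B_R\subset\widehat G$, it first shows that $\hat\psi$ is a \emph{constant} phase times a positive measure; once the phase of $\hat\psi$ is constant, the map $\chi\mapsto\overline{\chi(g)}e^{i\theta(g)}$ is continuous in $\chi$ for fixed $g$, the a.e.\ identity extends to all $\chi\in\mathrm{supp}(\hat\psi)$, and evaluating at $1\in\mathrm{supp}(\hat\psi)$ becomes legitimate.

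Your argument can be repaired without the paper's ball construction, but not by insisting on the point $1$: work with the full-$|\hat\psi|$-measure set $T'$ of ``good'' characters furnished by Fubini, and for $\chi_1,\chi_2\in T'$ run your closed-set upgrade in $g$ and evaluate at $0\in S$ (which \emph{is} legitimate, since the identity then holds on all of $S$). This shows $v$ is $|\hat\psi|$-a.e.\ equal to a constant $c$ and that all $\chi\in T'$ share the same restriction to $S$; modulating by some fixed $\chi_0\in T'$ (rather than by an arbitrary element of $T$) and absorbing $c$ into the central $\mathbb{S}^1$ factor makes both $\psi$ and $\hat\psi$ positive, after which your support inclusion $\mathrm{supp}(\hat\psi)\subset H^{\perp}$ and the concluding invariance argument go through as written. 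As it stands, though, the step ``choosing $\chi_2=1$ shows $\chi|_S\equiv1$ for every $\chi\in T$'' is not justified, and it is the crux of the proposition.
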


The final ingredient of the proof is the following lemma, which can be seen as a weak multiplicative uncertainty principle, as for example it is a consequence of the Donoho-Stark uncertainty principle \autocite{Donoho_Stark} in the special case of finite abelian groups.

\begin{lemma}[Weak uncertainty principle]
\label{uncertainty principle}
    Let $G$ be a \gls{lca} group and $H\subset G$ be a closed subgroup. Let $\psi$ be a non-zero positive Fourier measure on $G$ such that $\mathrm{supp}(\psi)\subset H$ and $\mathrm{supp}(\hat{\psi})\subset H^{\perp}$. Then $\psi$ is a Haar measure on $H$.
\end{lemma}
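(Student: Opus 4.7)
The plan is to convert the spectral support condition $\mathrm{supp}(\hat{\psi}) \subset H^{\perp}$ into translation invariance of $\psi$ under all elements of $H$, and then to conclude by the uniqueness of Haar measure on the closed subgroup $H$.

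First, I would use the standard intertwining of translation and modulation under the Fourier transform. Extending the action of $T_h$ to $\mathcal{S}'(G)$ by duality, for every $h \in H$ one has in $\mathcal{S}'(\widehat{G})$ the identity
\[
\widehat{T_h \psi}(\chi) = \overline{\chi(h)}\, \hat{\psi}(\chi).
\]
By definition of the annihilator, $\chi(h) = 1$ for every $\chi \in H^{\perp}$, so the continuous multiplier $\chi \mapsto \overline{\chi(h)}$ is identically $1$ on $\mathrm{supp}(\hat{\psi})$. Since $\hat{\psi}$ is a complex tempered measure, multiplication by this continuous function leaves it unchanged, so $\widehat{T_h \psi} = \hat{\psi}$. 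By injectivity of the Fourier transform on tempered distributions, this yields $T_h \psi = \psi$ for every $h \in H$.

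Second, since $\psi$ is a non-zero positive Radon measure on $G$ supported on the closed subgroup $H$, it restricts to a non-zero positive Radon measure on $H$. The previous step shows this restricted measure is invariant under every $H$-translation. By uniqueness of the Haar measure on the \gls{lca} group $H$ (up to positive scaling), $\psi$ must itself be a Haar measure on $H$, completing the proof.

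The main technical care required lies in the first step: one must justify rigorously that multiplication of a complex tempered measure by a continuous function equal to $1$ on its support returns the measure itself, and that the resulting equality of Fourier transforms in $\mathcal{S}'(\widehat{G})$ forces equality of $\psi$ and $T_h \psi$ in $\mathcal{S}'(G)$. Both points are standard consequences of the theory of tempered distributions and the Fourier transform on \gls{sclca} groups developed in section 2.1, so I do not expect a fundamental obstruction beyond careful bookkeeping; positivity of $\psi$ is used only to ensure the translation-invariant object produced on $H$ is indeed a (non-zero positive) Haar measure rather than a vanishing complex one.
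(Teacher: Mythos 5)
Your proposal is correct and follows essentially the same route as the paper: both derive $H$-translation invariance of $\psi$ from $\mathrm{supp}(\hat{\psi})\subset H^{\perp}$ (the fact that $\chi(h)=1$ there) and then conclude by the characterization of Haar measure on $H$ as a non-zero positive translation-invariant Radon measure. The only difference is bookkeeping: you phrase the key step as $\widehat{T_h\psi}=\hat{\psi}$ plus injectivity of the Fourier transform on $\mathcal{S}'(G)$, while the paper unpacks the same identity directly by testing against $f\in\mathcal{S}(G)$ and using the duality $\int_G f\,d\psi=\int_{\widehat{G}}\hat{f}\,d\hat{\psi}$ twice.
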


Here $\mathrm{supp}$ refers to the usual notion of the support of a positive measure, defined as the complement of the union of open sets of zero measure. The support of a complex measure is by definition the support of its absolute value.

This lemma seems to be well-known in various forms \autocite{Rudin,Donoho_Stark,qualitative_uncertainty_principle, uncertainty_principle_entropy}. The above statement is adapted to our needs. We provide a simple proof for completeness:

\begin{proof}
    Let $h\in H$, $f\in \mathcal{S}(G)$ and consider $f_h(g)=f(g+h)$. Then $f_h\in \mathcal{S}(G)$ and $\hat{f_h}(\chi)=\chi(h)\hat{f}(\chi)$. It follows
    \begin{align*}
        \int_Hf(g+h)d\psi(g)&=\int_Gf_h(g)d\psi(g) \text{ \footnotesize since $\psi$ is supported on $H$}\\
        &=\int_{\widehat{G}}\chi(h)\hat{f}(\chi)d\hat{\psi}(\chi) \text{ \footnotesize by duality}\\
       &= \int_{\widehat{G}}\hat{f}(\chi)d\hat{\psi}(\chi) \text{ \footnotesize since $\hat{\psi}$ is supported on $H^{\perp}$ }\\
       &=\int_Gf(g)d\psi(g) \text{ \footnotesize by duality}\\
       &=\int_Hf(g)d\psi(g).
    \end{align*}
    Since this holds for arbitrary $f\in 
 \mathcal{S}(G)$, $\psi$ is invariant under translation by any $h\in H$, and is therefore a Haar measure on $H$ ($\psi$ is by assumption Radon since it is a tempered measure).
\end{proof}

\begin{proof}[Proof of Proposition \ref{indirect implication}]
    We recall that the form of equation \eqref{KD pure} implies that $\mathrm{KD}_{\psi}$ is a measure on $G\times \widehat{G}$ only if both $\psi$ and $\hat{\psi}$ are also (complex) measures.
    In the first part of the proof, we show that $\hat{\psi}$ is in fact, up to a constant phase, a positive measure on $\widehat{G}$.  
    
    Since $G$ is second countable, it is metrizable with a proper metric; so is $\widehat{G}$ (see \autocite{Proper_metric}). In this proof, we fix one such metric for $G$ and for $\widehat{G}$.
    For $R>0$, consider $B_{R}\subset \widehat{G}$ the closed ball of $\widehat{G}$, centered at $1$ and of radius $R$, which is meant to become large. Since the metric on $\widehat{G}$ is proper, $B_R$ is compact. Let $A\subset G$ be a compact neighbourhood of $0$. The map 
    \begin{align*}
        A\times B_R&\to\mathbb{R}\\
        (g,\chi)&\mapsto |\chi(g)-1|
    \end{align*}
    is continuous. Since both $A$ and $B_R$ are compact metric spaces, it follows that
    \begin{align*}
        A&\to \mathbb{R}\\
        g&\mapsto \sup_{\chi\in B_R}|\chi(g)-1|
    \end{align*}
        is also continuous on $A$. Then, for any $n\geq 1$, there exists $\delta_n>0$ such that
        \begin{equation*}
            \sup_{\chi\in B_R}|\chi(g)-1|\leq \frac{1}{n}
        \end{equation*}
        whenever $g\in B(0,\delta_n)\cap A$, the open ball $B(0,\delta_n)$ being defined using the proper metric of $G$. Since $A$ is a neigbourhood of $0$ we can assume, up to taking smaller $\delta_n$'s, that $B(0,\delta_n)\subset A$ for all $n\geq 1$.
        
        Now, up to a translation, we can assume that $0\in \mathrm{supp}(\psi)$. Moreover from the Radon-Nikodym theorem, there exists a measurable real valued function $\theta$ on $G$ such that $d\psi=e^{i\theta}d|\psi|=(\cos(\theta)+i\sin(\theta))d|\psi|$. For each $n\geq 1$, define the following sets:
        \begin{align*}
            &B_n^1=\{g\in B(0,\delta_n):\:\cos(\theta(g))\geq \frac{1}{\sqrt{2}}\},\\
            &B_n^2=\{g\in B(0,\delta_n):\:\cos(\theta(g))\leq\frac{-1}{\sqrt{2}}\},\\
            &B_n^3=\{g\in B(0,\delta_n):\:\sin(\theta(g))> \frac{1}{\sqrt{2}}\},\\
            &B_n^4=\{g\in B(0,\delta_n):\:\sin(\theta(g))<\frac{-1}{\sqrt{2}}\}.
        \end{align*}
        Those sets are measurable, two by two disjoint, and
        \begin{equation*}
            |\psi|(B(0,\delta_n))=|\psi|(B_n^1)+|\psi|(B_n^2)+|\psi|(B_n^3)+|\psi|(B_n^4).
        \end{equation*}
        Since $0\in \mathrm{supp}(\psi)=\mathrm{supp}(|\psi|)$, $|\psi|(B(0,\delta_n))>0$, and so there exists $i_0\in\{1,2,3,4\}$ such that $|\psi|(B^{i_0}_n)>0$. A quick distinction of cases shows that
        \begin{equation*}
            |\psi(B_n^{i_0})|=\sqrt{\left(\int_{B_n^{i_0}}\cos(\theta)d|\psi|\right)^2+ \left(\int_{B_n^{i_0}}\sin(\theta)d|\psi|\right)^2} \geq \frac{1}{\sqrt{2}}|\psi|(B^n_{i_0}).
        \end{equation*}
        Define $A_n:=B_n^{i_0}\subset B(0,\delta_n)\subset A$ for all $n\geq 1$.
        We then have for any measurable set $C\subset B_R$
        \begin{align*}
            \left| \frac{1}{|\psi(A_n)|}\int_{A_n\times C}(\overline{\chi(g)}-1)d\psi(g)\overline{d\hat{\psi}(\chi)}\right|&\leq \frac{1}{|\psi(A_n)|}\int_{A_n\times C}|\chi(g)-1|d|\psi|(g)d|\hat{\psi}|(\chi)\\
            &\leq \frac{1}{|\psi(A_n)|}\frac{1}{n}|\psi|(A_n)|\hat{\psi}|(C)\\
            &\leq \frac{\sqrt{2}}{n}|\hat{\psi}|(C)\\
            &\xrightarrow[n\to+\infty]{}0.
        \end{align*}
        Moreover, the sequence $\left(\frac{\psi(A_n)}{|\psi(A_n)|}\right)_{n\geq 1}$ having modulus $1$, we can assume, up to taking a subsequence, that there exists $\lambda_{R}\in \mathbb{S}^1$ such that $\frac{\psi(A_n)}{|\psi(A_n)|}\xrightarrow[n\to +\infty]{}\overline{\lambda_R}$. We stress that, a priori, the value of this constant depends on the chosen
        radius $R$. Then for any measurable set $C\subset B_R$
        \begin{align*}
            0\leq &\frac{1}{|\psi(A_n)|}\mathrm{KD}_{\psi}(A_n\times C)\\
            =&\frac{1}{|\psi(A_n)|}\int_{A_n\times C}\overline{\chi(g)}d\psi(g)d\overline{\hat{\psi}(\chi)}\\
            =&\frac{1}{|\psi(A_n)|}\psi(A_n)\overline{\hat{\psi}(C)} + \frac{1}{|\psi(A_n)|}\int_{A_n\times C}(\overline{\chi(g)}-1)d\psi(g)d\overline{\hat{\psi}(\chi)}\\
            \xrightarrow[n\to+\infty]{}&\overline{\lambda_{R}} \overline{\hat{\psi}(C)}.
        \end{align*}
        Therefore, for any $R>0$, there exists $\lambda_R\in \mathbb{S}^1$ such that $\lambda_R\hat{\psi}\vert_{B_R}$ is a positive measure. Moreover, by definition for any $C\subset B_R$
        \begin{equation*}
            |\hat{\psi}|(C)=\sup_{\cup_nE_n=C}\sum_{n}|\hat{\psi}(E_n)|,
        \end{equation*}
        where the supremum is taken over all countable partitions $(E_n)_n$ of $C$. For such a partition, we have $|\hat{\psi}(E_n)|=\lambda_R\hat{\psi}(E_n)$, hence
        \begin{equation*}
            |\hat{\psi}|(C)=\lambda_R\hat{\psi}(C)=|\hat{\psi}(C)|.
        \end{equation*}
        Since $\hat{\psi}\neq 0$, there exists $R_0$ big enough so that $|\hat{\psi}(B_{R_0})|=|\hat{\psi}|(B_{R_0})>0$.
        For any $R>R_0$, we have
        \begin{equation*}
            \lambda_R\hat{\psi}(B_{R_0})=\lambda_R\hat{\psi}\vert_{B_R}(B_{R_0})=|\hat{\psi}(B_{R_0})|=\lambda_{R_0}\hat{\psi}\vert_{B_{R_0}}(B_{R_0})=\lambda_{R_0}\hat{\psi}(B_{R_0}),
        \end{equation*}
        hence $\lambda_R=\lambda_{R_0}$ for any $R>R_0$. To finish, observe that for all measurable $B\subset \widehat{G}$, we have
        \begin{equation*}
            \hat{\psi}(B)=\lim_{R\to+\infty}\hat{\psi}\vert_{B_R}(B).
        \end{equation*}
        It follows that
        \begin{equation*}
            \lambda_{R_0}\hat{\psi}(B)=\lambda_{R_0}\lim_{R\to+\infty}\hat{\psi}\vert_{B_R}(B)=\lim_{R\to+\infty}\lambda_R\hat{\psi}\vert_{B_R}(B)\geq 0.
        \end{equation*}
        Hence up to a multiplication by a constant, $\hat{\psi}$ is a positive measure.

        We can now conclude the proof. We consider from now on that $\hat{\psi}$ is a positive measure. Up to a translation, we can assume further that $1\in \mathrm{supp}(\hat{\psi})$. For any Borel subset $A\subset G\times \widehat{G}$, we have
    \begin{align*}
        0&\leq \mathrm{KD}_{\psi}(A)\\
        &=\int_{A}\overline{\chi(g)}e^{i\theta(g)}d|\psi|(g)d\hat{\psi}(\chi).
    \end{align*}
     Since $\hat{\psi}$ is a positive measure, we have
    \begin{equation*}
        \overline{\chi(g)}e^{i\theta(g)}\geq 0
    \end{equation*}
    for $\hat{\psi}$-almost every $\chi\in \widehat{G}$ and $|\psi|$-almost every $g\in G$. Since this quantity also has modulus $1$ almost everywhere, we deduce
    \begin{equation}
    \label{equation phase of psi}
        \overline{\chi(g)}e^{i\theta(g)}=1
    \end{equation}
    for $\hat{\psi}$-almost every $\chi\in \widehat{G}$ and $|\psi|$-almost every $g\in G$. For fixed $g$, the map $\chi\mapsto \overline{\chi(g)}e^{i\theta(g)}$ is continuous, therefore equation \eqref{equation phase of psi} holds for $|\psi|$-almost every $g\in G$  and every $\chi\in \mathrm{supp}(\hat{\psi})$. In particular, since we assumed that $1\in \mathrm{supp}(\hat{\psi})$, we have $e^{i\theta(g)}=1$ $|\psi|$-almost everywhere, which implies that $\psi$ is also a positive measure. It then follows $\chi(g)=1$ for $\psi\otimes\hat{\psi}$-almost every $(g,\chi)\in G\times \widehat{G}$, so by continuity $\chi(g)=1$ for all $(g,\chi)\in \mathrm{supp}(\mathrm{KD}_{\psi})=\mathrm{supp}(\psi)\times \mathrm{supp}(\hat{\psi})$. Define $H=\overline{\langle \mathrm{supp}(\psi)\rangle}$, the closed subgroup induced by the support of $\psi$. Then obviously $\mathrm{supp}(\psi)\subset H$. For any $\chi\in \mathrm{supp}(\hat{\psi})$, we have shown that $\chi(g)=1$ for all $g\in \mathrm{supp}(\psi)$. Since $\chi$ is a continuous group morphism, we must have $\chi(g)=1$ for all $g\in H$, hence $\mathrm{supp}(\hat{\psi})\subset H^{\perp}$. The result then follows from the uncertainty principle \ref{uncertainty principle}.
\end{proof}
This finishes the proof of Theorem \ref{KD positive on second countable LCA groups}.
\section{Identifying a classical fragment of quantum mechanics}
\label{Identifying classical fragment}

In this section, we address the problem of determining the quantum states (resp. Hilbert-Schmidt observables) which have a positive (resp. real) Kirkwood-Dirac distribution.
We recall the sets of interest in this setting:
\begin{itemize}
    \item $\KDpospure$, the set of KD-positive pure states.
    \item $\KDpos$, the convex set of KD-positive states.
    \item $\KDr$, the set of KD-real self-adjoint Hilbert-Schmidt operators.
\end{itemize}
We have the obvious inclusions:
\begin{equation}
    \KDpospure\subset\KDpos\subset\KDr.
\end{equation}

We begin by giving the complete description of $\KDpospure$, which follows almost immediately from Theorem \ref{KD positive on second countable LCA groups}.
Then, we use O'Connell's formula to prove Theorem \ref{existence of KD real}. It exhibits simple topological criteria that are equivalent to the fact that the classical fragment of quantum mechanics associated with the \KD distribution is non-trivial. We then give some general topological properties of the sets $\KDr$ and $\KDpos$. Finally, we prove Theorem \ref{EKD+ on connected}, which precisely describes the classical fragment of quantum mechanics for connected compact \gls{sclca} groups.

\subsection{The Kirkwood-Dirac positive pure states}

 The set $\KDpospure$ is composed of \KD positive pure states $\proj{\psi}$ with $\psi$ a norm one element of $L^2(G)$. As $L^2(G)$ embeds naturally into $\mathcal{S}'(G)$, any such state is a generalized pure state and thus the complete description of $\KDpospure$ follows from the description of $\KDpospuregen$ given in the previous section. 

\begin{thm}
    \label{KDpospure}
    A normalized function $\psi\in L^2(G)$ has a positive \KD distribution if and only if there exists a compact open subgroup $H\subset G$, and $(g,\chi,z)\in G/H\times \widehat{G}/H^{\perp}\times \mathbb{S}^1$ such that
    \begin{equation}
        \psi(g')=\psi^H_{g,\chi,z}(g')=\frac{z\chi(g')}{\sqrt{\mu_G(H)}}\mathds{1}_H(g'-g)
    \end{equation}
    for almost every $g'\in G$. Finally
    \begin{equation}
        \KDpospure=\{\proj{\psi^H_{g,\chi,1}}\},
    \end{equation}
    where $H$ runs through the compact open subgroups of $G$ and $(g,\chi)\in G/H\times \widehat{G}/H^{\perp}$ for any such $H$.
\end{thm}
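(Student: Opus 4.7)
The plan is to reduce to Theorem \ref{KD positive on second countable LCA groups} and then cut out, from the family of generalized KD-positive pure states, those that actually live in $L^2(G)$.

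Since $L^2(G)\hookrightarrow \mathcal{S}'(G)$, any normalized $\psi\in L^2(G)$ defining a KD-positive pure state is, a fortiori, a generalized KD-positive pure state. Theorem \ref{KD positive on second countable LCA groups} then produces a closed subgroup $H\subset G$, an element $(g_0,\chi_0,z_0)\in H(G)$ and a scalar $\lambda\in\mathbb{C}^*$ (absorbing the normalization of the Haar measure on $H$) such that $\psi=\lambda\,U(g_0,\chi_0,z_0)\mu_H$ in $\mathcal{S}'(G)$. Since $U(g_0,\chi_0,z_0)$ is a unitary on $L^2(G)$, it preserves $L^2$-membership, so the question reduces to determining for which closed subgroups $H\subset G$ the distribution $\mu_H$ is represented by a function in $L^2(G)$.

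The key measure-theoretic step is the dichotomy that a closed subgroup $H\subset G$ is either open or has empty interior, and in the latter case $\mu_G(H)=0$. Indeed, if $\mu_G(H)>0$, a Steinhaus-type argument shows that $H-H$ contains a neighbourhood of $0$, and since $H$ is a subgroup $H-H=H$ is open. Hence if $H$ is not open, $\mu_H$ is carried by a $\mu_G$-null set and so cannot be a nonzero element of $L^2(G)$. If $H$ is open, the restriction $\mu_G|_H$ is a left-invariant Radon measure on the LCA group $H$, so up to scaling $\mu_H=\mathds{1}_H\cdot\mu_G$; this function lies in $L^2(G)$ iff $\mu_G(H)<\infty$, which for an open subgroup of an LCA group is equivalent to $H$ being compact. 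Thus $H$ must be compact and open.

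For such $H$, normalizing to $\|\psi\|_2=1$ fixes the prefactor to $1/\sqrt{\mu_G(H)}$, and unwinding $U(g_0,\chi_0,z_0)=z_0 M_{\chi_0}T_{g_0}$ yields the announced formula. The stated parameterization follows by observing that translating $g_0$ by any $h\in H$ leaves $\mathds{1}_H(\cdot-g_0)$ unchanged, while multiplying $\chi_0$ by any $\eta\in H^\perp$ only contributes a global phase on the support $g_0+H$ (because $\eta(g'-g_0)=1$ there), which can be absorbed into $z_0$. Finally, the rank-one projector $\proj{\psi}$ is insensitive to $z\in\mathbb{S}^1$, so $z=1$ suffices in the final description of $\KDpospure$. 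The main obstacle here is the Steinhaus-type lemma ruling out non-open closed subgroups; once it is available, the rest is bookkeeping with the Weyl-Heisenberg action.
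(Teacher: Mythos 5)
Your proposal is correct and follows essentially the same route as the paper: reduce to Theorem \ref{KD positive on second countable LCA groups}, then use the fact that a nonzero $L^2$ function is absolutely continuous with respect to $\mu_G$ to force $\mu_G(H)>0$ and hence $H$ open, use finiteness of the Haar measure of an open subgroup to force compactness, and finish with the same bookkeeping for the $(g,\chi,z)$ parametrization. The only cosmetic difference is that you establish the step ``a closed subgroup of positive Haar measure is open'' by a Steinhaus-type argument, whereas the paper simply cites Bourbaki for it.
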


\begin{proof}
Let $\proj{\psi}$ be a KD-positive pure state. Theorem \ref{KD positive on second countable LCA groups} implies that, as a tempered distribution, $\psi=\psi^H_{(g,\chi,z)}=U(g,\chi,z)\mu_H$, where $\mu_H$ is a Haar measure on some closed subgroup $H\subset G$ and for some $(g,\chi,z)\in H(G)$. Moreover, $\psi$ being a square integrable function, it can be seen as an absolutely continuous measure with respect to $\mu_G$, which implies that $\mu_G(H)>0$, and thus that $H$ is open \autocite{Bourbaki} (chapter VII). This also implies that $\mu_H=c\mathds{1}_H$ for some $c>0$ (here we identify the absolutely continuous measure with its Radon-Nikodym derivative). The fact that $\psi$ is normalized yields further $\mu_G(H)<+\infty$ and $c=\mu_G(H)^{-1/2}$. Since $H$ is open, $\mu_G\vert_H$ is a Haar measure on $H$, and $\mu_G(H)<+\infty$ therefore implies that $H$ is also compact. The Plancherel and Poisson-Tate formulas imply that $\hat{\mu}_H$ is a Haar measure on $H^{\perp}$, as well as a square integrable function. This again implies $0<\mu_{\widehat{G}}(H^{\perp})<+\infty$ and that $\hat{\mu}_H=\mu_{\widehat{G}}(H^{\perp})^{-1/2}\mathds{1}_{H^{\perp}}$. In the end, we have 
 \begin{align*}
     &\psi^H_{g,\chi,z}(g')=\frac{z\chi(g')}{\sqrt{\mu_G(H)}}\mathds{1}_H(g'-g)\\
     &\hat{\psi}^H_{g,\chi,z}(\chi')=\frac{z\chi\overline{\chi'}(g)}{\sqrt{\mu_{\widehat{G}}(H^{\perp})}}\mathds{1}_{H^{\perp}}(\chi'\overline{\chi}).
 \end{align*}
One sees that $\psi^H_{g,\chi,z}=z\psi^H_{g,\chi,1}$ and thus $\proj{\psi^H_{g,\chi,z}}=\proj{\psi^H_{g,\chi,1}}$ for any $z\in \mathbb{S}^1$. Moreover, if $(g,\chi)$ and $(\tilde{g},\tilde{\chi})$ are in the same equivalence class with respect to the subgroup $H\times H^{\perp}\subset G\times\widehat{G}$, then $\psi^H_{g,\chi,1}=\psi^H_{\tilde{g},\tilde{\chi},1}$. The set $\KDpospure$ is then exactly
\begin{equation}
\label{KD positive pure states}
    \KDpospure=\{\proj{\psi^H_{g,\chi,1}}\},
\end{equation}
where $H$ runs through the compact open subgroups of $G$ and $(g,\chi)$ runs through $G/H\times \widehat{G}/H^{\perp}$ for each $H$.
\end{proof}

The KD distributions of the elements of $\KDpospure$ are given by
\begin{equation*}
    \mathrm{KD}_{\psi^H_{g,\chi,1}}(g',\chi')=\frac{\mathds{1}_H(g'-g)\mathds{1}_{H^{\perp}}(\chi'\overline{\chi})}{\sqrt{\mu_G(H)\mu_{\widehat{G}}(H^{\perp})}}.
\end{equation*}
For a compact open subgroup $H\subset G$, the Poisson-Tate and Plancherel formulas imply that $H^{\perp}$ is also compact and open and that there exists $c>0$ such that
\begin{equation}
    \widehat{\mathds{1}_H}=c\mathds{1}_{H^{\perp}}.
\end{equation}
Integrating over $\widehat{G}$, we get by the Fourier inversion formula $c=\mu_{\widehat{G}}(H^{\perp})^{-1}$. The Plancherel formula then yields
\begin{equation}
\label{product of measures}
    \mu_G(H)\mu_{\widehat{G}}(H^{\perp})=1.
\end{equation}
It follows finally
\begin{equation}
    \label{KD distribution of clopen subgroup}
    \mathrm{KD}_{\psi^H_{g,\chi}}(g',\chi')=\mathds{1}_H(g'-g)\mathds{1}_{H^{\perp}}(\chi'\overline{\chi}).
\end{equation}

\subsection{Topological characterization of groups for which the classical fragment is non trivial}
\label{2nd result}

Note that from Theorem \ref{KDpospure}, it is possible that $\KDpospure=\emptyset$ if the group does not admit any compact open subgroup, as for $\mathbb{R}^n$ for example. In that case there are no KD-positive pure states, but a priori there could still exist KD-positive states that are non-pure and non-zero KD-real Hilbert-Schmidt observables. However, Theorem \ref{existence of KD real} forbids such a situation. 
To prove it, we firstly establish the following lemma which describes the set $\KDr$ using the support of the characteristic function. It is a direct consequence of O'Conell's formula \eqref{KD and characteristic}.

\begin{lemma}
\label{support of characteristic function}
    Let $G$ be a \gls{sclca} group. Then $\KDr$ is the set of self-adjoint Hilbert-Schmidt operators $A$ for which the characteristic function has the following property:
    \begin{equation}
    \label{support of characteristic function equation}
        \mathrm{supp}(\mathfrak{X}^0_{A})\subset\{(g,\chi)\in G\times\widehat{G}:\chi(g)=1\},
    \end{equation}
    where $\mathrm{supp}(\mathfrak{X}^0_{A})$ is the essential support of the  function $\mathfrak{X}^0_{A}\in L^2(G\times \widehat{G})$.
\end{lemma}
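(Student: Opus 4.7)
The plan is to combine O'Connell's formula with the conjugation identity \eqref{KD conjugate} and use the injectivity of the symplectic Fourier transform to turn the condition ``$\mathrm{KD}_A$ is real'' into a pointwise algebraic equation on $\mathfrak{X}^0_A$.

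First I would unpack what ``KD-real'' means for a self-adjoint Hilbert-Schmidt operator. Since the restriction of $\mathrm{KD}$ to $\mathcal{T}_2(G)$ is unitary onto $L^2(G\times\widehat{G})$ by \eqref{KD unitary}, the distribution $\mathrm{KD}_A$ is an honest $L^2$-function and it is a real tempered distribution exactly when $\mathrm{KD}_A=\overline{\mathrm{KD}_A}$ almost everywhere. For self-adjoint $A$ we have $A^*=A$, so \eqref{KD conjugate} gives $\overline{\mathrm{KD}_A}=\mathrm{aKD}_A=\mathcal{F}_{\mathrm{symp}}(\mathfrak{X}^0_A)$. Combined with O'Connell's formula $\mathrm{KD}_A=\mathcal{F}_{\mathrm{symp}}(\mathfrak{X}^1_A)$, the reality condition becomes
\begin{equation*}
\mathcal{F}_{\mathrm{symp}}(\mathfrak{X}^1_A)=\mathcal{F}_{\mathrm{symp}}(\mathfrak{X}^0_A).
\end{equation*}

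Next, I would invoke the fact that $\mathcal{F}_{\mathrm{symp}}$ is a unitary (hence injective) operator on $L^2(G\times\widehat{G})$, so the above displays is equivalent to $\mathfrak{X}^1_A=\mathfrak{X}^0_A$ almost everywhere. Using the identity $\mathfrak{X}^1_A(g,\chi)=\mathfrak{X}^0_A(g,\chi)\overline{\chi(g)}$ that was already established in the text, this reduces to
\begin{equation*}
\mathfrak{X}^0_A(g,\chi)\bigl(1-\overline{\chi(g)}\bigr)=0\quad \text{for a.e.\ } (g,\chi)\in G\times\widehat{G}.
\end{equation*}
Since $1-\overline{\chi(g)}=0$ is equivalent to $\chi(g)=1$, this equation says exactly that the set $\{(g,\chi):\chi(g)\neq 1\}$ is a null set for $|\mathfrak{X}^0_A|^2\,d\mu_G\,d\mu_{\widehat{G}}$, i.e.\ the essential support of $\mathfrak{X}^0_A$ is contained in $\{(g,\chi):\chi(g)=1\}$, which is the sought-for equation \eqref{support of characteristic function equation}.

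All steps are chains of equivalences, so both implications follow simultaneously. The only potentially delicate point is measure-theoretic bookkeeping of ``almost everywhere'' vs.\ ``essential support'', but the set $\{(g,\chi):\chi(g)=1\}$ is closed (since $(g,\chi)\mapsto\chi(g)$ is continuous on $G\times\widehat{G}$), so its complement is open and the characterization of the essential support via null open sets applies directly. I do not anticipate any substantive obstacle: the lemma is essentially a corollary of O'Connell's formula combined with the already-proven identity $\overline{\mathrm{KD}_{A^*}}=\mathrm{aKD}_A$.
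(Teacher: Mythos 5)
Your proposal is correct and follows essentially the same route as the paper's proof: O'Connell's formula together with the identity $\overline{\mathrm{KD}_{A^*}}=\mathcal{F}_{\mathrm{symp}}(\mathfrak{X}^0_A)$, injectivity of the symplectic Fourier transform, the pointwise relation $\mathfrak{X}^1_A=\overline{\chi(g)}\mathfrak{X}^0_A$, and the closedness of $\{(g,\chi):\chi(g)=1\}$ to pass from the a.e.\ statement to the essential-support inclusion. No gaps; your explicit remark that $\mathrm{KD}_A\in L^2$ makes ``real as a tempered distribution'' equivalent to ``real a.e.'' is a welcome clarification of a step the paper leaves implicit.
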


\begin{proof}
     Let $A$ be a self-adjoint Hilbert-Schmidt operator.
    O'Conell's formula is
    \begin{equation}
        \mathrm{KD}_{A}=\mathcal{F}_{\mathrm{symp}}(\mathfrak{X}^1_{A})
    \end{equation}
    We also have from \eqref{KD conjugate} and since $A$ is self-adjoint
    \begin{equation}
        \overline{\mathrm{KD}_A}=\mathcal{F}_{\mathrm{symp}}(\mathfrak{X}_A^0).
    \end{equation}
    Thus $\mathrm{KD}_A$ is a.e. real if and only if
    \begin{equation}
        \mathcal{F}_{\mathrm{symp}}(\mathfrak{X}^1_{A})=\mathcal{F}_{\mathrm{symp}}(\mathfrak{X}_A^0).
    \end{equation}
    Taking the inverse Fourier transform, this is equivalent to 
\begin{equation*}
\overline{\chi(g)}\mathfrak{X}^0_{A}(g,\chi)=\mathfrak{X}^0_{A}(g,\chi)
\end{equation*}
almost-everywhere.
Since $\{(g,\chi)\in G\times \widehat{G}: \chi(g)=1\}$ is a closed subset of $G\times \widehat{G}$, this is equivalent to $\mathrm{supp}(\mathfrak{X}^0_{A})\subset \{(g,\chi)\in G\times \widehat{G}: \chi(g)= 1\}$, and the result follows.
\end{proof}

We recall that the identity component $G_0$ of a \gls{sclca} group $G$ is the connected component of $G$ that contains $0$. It is a standard fact that it is a closed subgroup of $G$. However, it is not necessarily open. For clarity we state Theorem \ref{existence of KD real} once more:
\begingroup
\renewcommand{\thethm}{1.2}
\begin{thm}
    Let $G$ be a \gls{sclca} group and let $G_0$ be its identity component. The following are equivalent:
    \begin{enumerate}
        \item $G_0$ is compact.
        \item $G$ admits a compact open subgroup.
        \item $\KDpospure\neq \emptyset$.
        \item $\KDpos\neq \emptyset$.
        \item $\KDr\neq \{0\}$.
    \end{enumerate}
\end{thm}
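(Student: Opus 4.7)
The plan is to prove the chain of implications $(1)\Leftrightarrow(2)\Rightarrow(3)\Rightarrow(4)\Rightarrow(5)\Rightarrow(2)$, with the bulk of the work concentrated in the final implication $(5)\Rightarrow(2)$.

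For $(1)\Leftrightarrow(2)$, I would invoke the structure theorem $G\simeq \mathbb{R}^n\times H_1$ where $H_1$ contains a compact open subgroup $K$. The identity component is $G_0\simeq \mathbb{R}^n\times (H_1)_0$, and since $(H_1)_0\subset K$ (any connected subset containing $0$ lies inside every open neighbourhood subgroup), $(H_1)_0$ is compact. Hence $G_0$ is compact iff $n=0$ iff $G=H_1$ has a compact open subgroup. Conversely, if $G$ has a compact open subgroup $K'$, then $K'$ is open and closed, so the connected set $G_0$ lies inside it, which gives compactness.

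For the easy half of the cycle, $(2)\Rightarrow(3)$ is a direct appeal to Theorem \ref{KDpospure}, which explicitly produces KD-positive pure states $\proj{\psi^H_{g,\chi,1}}$ whenever $H$ is a compact open subgroup; $(3)\Rightarrow(4)$ is the obvious inclusion $\KDpospure\subset\KDpos$; and $(4)\Rightarrow(5)$ uses that any $\rho\in \KDpos$ is trace class (hence Hilbert-Schmidt) and self-adjoint with $\mathrm{KD}_\rho$ a positive measure, thus in particular a real distribution, and non-zero since $\tr(\rho)=1$.

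The main step is $(5)\Rightarrow(2)$. I would start from a non-zero $A\in \KDr$ and apply Lemma \ref{support of characteristic function} to deduce
\begin{equation*}
    \mathrm{supp}(\mathfrak{X}^0_A)\subset S:=\{(g,\chi)\in G\times\widehat{G}:\chi(g)=1\}.
\end{equation*}
Since $\mathfrak{X}^0$ is an isometry from $\mathcal{T}_2(G)$ onto $L^2(G\times\widehat{G})$ (by the remark following O'Conell's formula) and $A\neq 0$, the essential support of $\mathfrak{X}^0_A$ must have strictly positive Haar measure in $G\times\widehat{G}$, so $S$ itself has positive measure. Now I argue contrapositively: if $G$ has no compact open subgroup, the structure theorem gives $G\simeq \mathbb{R}^n\times H_1$ with $n\geq 1$, and I want to show $S$ has measure zero, producing a contradiction.

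The key Fubini computation proceeds as follows. Identifying $\widehat{G}\simeq \mathbb{R}^n\times \widehat{H_1}$, the condition $\chi(g)=1$ for $(r,h,s,\chi)\in\mathbb{R}^n\times H_1\times\mathbb{R}^n\times\widehat{H_1}$ reads $e^{ir\cdot s}\chi(h)=1$. Fixing $(h,\chi)\in H_1\times\widehat{H_1}$ and writing $\chi(h)=e^{-i\theta}$ for some $\theta\in\mathbb{R}$, the slice becomes
\begin{equation*}
    \{(r,s)\in \mathbb{R}^{2n}: r\cdot s\in \theta+2\pi\mathbb{Z}\}=\bigcup_{k\in\mathbb{Z}}f^{-1}(\theta+2\pi k),
\end{equation*}
where $f(r,s)=r\cdot s$. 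Since $f$ is smooth with gradient $(s,r)$ vanishing only at the origin, each level set $f^{-1}(c)$ is a countable union of a point and a smooth hypersurface of $\mathbb{R}^{2n}$, hence has Lebesgue measure zero. By countable additivity the slice has measure zero, and by Fubini $S$ has measure zero, a contradiction. Therefore $n=0$, and $G=H_1$ admits a compact open subgroup.

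The hard part, as indicated, is the measure-theoretic argument in $(5)\Rightarrow(2)$; the rest is essentially bookkeeping once Lemma \ref{support of characteristic function} and the surjectivity of $\mathfrak{X}^0$ onto $L^2(G\times\widehat{G})$ are in hand. A minor subtlety I would double-check is that the essential-support formulation of the support condition (as an $L^2$ condition on $\mathfrak{X}^0_A$) is genuinely inherited from the distributional condition on $\mathrm{KD}_A$; this is fine because $\mathrm{KD}_A\in L^2(G\times\widehat{G})$ whenever $A\in\mathcal{T}_2(G)$, and reality as a distribution coincides with a.e.\ reality as a function.
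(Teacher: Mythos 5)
Your proposal is correct, and it matches the paper on everything except the crucial closing implication, where you take a genuinely different route. The paper proves $(5)\Rightarrow(1)$ directly and intrinsically: starting, as you do, from Lemma \ref{support of characteristic function} and the injectivity of $\mathfrak{X}^0$ to get $\mu_G\otimes\mu_{\widehat{G}}\bigl(\{(g,\chi):\chi(g)=1\}\bigr)>0$, it applies Tonelli by slicing over $\chi\in\widehat{G}$, so that the set of characters with $\mu_G(\langle\chi\rangle^{\perp})>0$ has positive measure; since a closed subgroup of positive Haar measure is open, each such $\langle\chi\rangle^{\perp}$ contains $G_0$, whence $\mu_{\widehat{G}}(G_0^{\perp})>0$, $G_0^{\perp}$ is open, $\widehat{G}/G_0^{\perp}$ is discrete, and $G_0\simeq\widehat{\widehat{G}/G_0^{\perp}}$ is compact. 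You instead prove $(5)\Rightarrow(2)$ by contradiction, invoking the structure theorem a second time to write $G\simeq\mathbb{R}^n\times H_1$ with $n\geq 1$ and slicing the incidence set over $H_1\times\widehat{H_1}$, reducing to the elementary fact that the level sets of $(r,s)\mapsto r\cdot s$ are Lebesgue-null in $\mathbb{R}^{2n}$ (your phrase «countable union of a point and a smooth hypersurface» is slightly off for a single level set, but the null-set conclusion is right). Both arguments are sound; yours is more hands-on and avoids the duality step $G_0\simeq\widehat{\widehat{G}/G_0^{\perp}}$ and the «positive measure implies open» fact for subgroups, at the price of a second appeal to the structure theorem and an explicit coordinate computation, while the paper's version is coordinate-free, pinpoints exactly why $G_0$ must be compact, and closes the cycle at item 1 rather than item 2 (either closure is logically sufficient given $(1)\Leftrightarrow(2)$, which you also establish correctly).
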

\endgroup
\begin{proof}
    The fact that 1 implies 2 follows from the structure theorem \eqref{structure theorem}: $G\simeq \mathbb{R }^n\times H_1$, where $H_1$ admits a compact open set. The fact that $G_0$ is compact implies $n=0$ so $G\simeq H_1$ admits indeed a compact open subgroup.
    
    The fact that 2 implies 3 follows from Theorem \ref{KDpospure}.
    The implications 3 implies 4 implies 5 are trivial.
    
    We finish by showing that 5 implies 1.
     Let $A\in\KDr\setminus\{0\}$. Then $\mathfrak{X}^0_{A}$ is a non-zero element of $L^2(G\times \widehat{G})$, and it satisfies \eqref{support of characteristic function equation}. In particular, we must have
    \begin{equation}
    \label{positive product measure}
        0<\mu_G\otimes\mu_{\widehat{G}}(\{(g,\chi)\in G\times \widehat{G}:\chi(g)=1\})=\int_{\widehat{G}}\mu_G(\langle \chi\rangle ^{\perp})d\mu_{\widehat{G}}(\chi)
    \end{equation}
    by Tonelli's theorem, where for any $\chi\in \widehat{G}$, $\langle \chi\rangle ^{\perp}=\{g\in G:\chi(g)= 1\}$. This equation implies 
    \begin{equation}
    \label{positive hat measure}
        \mu_{\widehat{G}}(\{\chi\in \widehat{G}:\mu_G(\langle \chi\rangle ^{\perp})>0\})>0.
    \end{equation}
    Let $\chi\in \widehat{G}$ such that $\mu_G(\langle \chi\rangle ^{\perp})>0$. Since $\langle \chi\rangle ^{\perp}$ is a closed subgroup of $G$, this implies that it is also open, and thus $G_0\subset \langle \chi\rangle ^{\perp}$. In particular $\chi\vert_{G_0}= 1$. Thus
    \begin{equation}
         \{\chi\in \widehat{G}:\mu_G(\langle \chi\rangle ^{\perp})>0\}\subset G_0^{\perp},
    \end{equation}
    so \eqref{positive hat measure} implies $\mu_{\widehat{G}}(G_0^{\perp})>0$, hence $G_0^{\perp}$ is open in $\widehat{G}$, so $\widehat{G}/G_0^{\perp}$ is discrete. We finally obtain that $G_0\simeq (G_0^{\perp})^{\perp}\simeq \widehat{\widehat{G}/G_0^{\perp}}$ is  a compact \gls{sclca} group.
\end{proof}

When the \gls{sclca} group $G$ is compact and connected, a precise description of $\KDr$ and $\KDpos$ can be given. This is the content of Theorem \ref{EKD+ on connected} that we prove in section \ref{3rd result}. First, we need some topological properties of both of those sets.

\subsection{Topological properties of the sets of KD-real Hilbert-Schmidt observables and of KD-positive states}

It is immediate to see that $\KDr$ is a real vector space. Moreover, the subset of $L^2(G\times \widehat{G})$ consisting of real-valued functions is closed. The set of self-adjoint Hilbert-Schmidt operators is closed in the Hilbert-Schmidt topology. As the \KD distribution is continuous from $\mathcal{T}_2(G)$ to $L^2(G\times \widehat{G})$ for the natural topologies, it follows that $\KDr$ is closed in the Hilbert-Schmidt topology.

It is also immediate to see that the set $\KDpos$ is convex. Let's us give some of its general properties. We recall that the weak topology on $\mathcal{T}_1(G)$ is defined by the weakest topology so that, for any bounded operator $A$, the map $\rho\mapsto \tr(\rho A)$ is continuous. Since those maps are continuous for the trace norm topology, the weak topology is weaker than the trace norm topology. Being inspired by the results of \autocite{Lami_Shirokov}, we prove the following:
\begin{prop}
\label{KD pos closed}
    Let $G$ be a \gls{sclca} group. The set $\KDpos$ is closed in the weak topology.
\end{prop}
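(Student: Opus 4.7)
The plan is to realize $\KDpos$ as an intersection of weakly closed convex subsets of $\mathcal{T}_1(G)$, each cut out by a continuous linear condition against a bounded operator. To begin, I would verify that the set of states is itself weakly closed: $\tr(\rho)=1$ is the single equality $\tr(\rho I)=1$, while $\rho\geq 0$ is equivalent to $\tr(\rho\proj{\psi})\geq 0$ for every $\psi\in L^2(G)$. Since $I$ and the rank-one projectors $\proj{\psi}$ are bounded operators, each of these is a weakly closed condition by the very definition of the weak topology, and their intersection is weakly closed.

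The substantive step is to translate KD-positivity into a comparable family of closed linear conditions. The key observation is that the unitary map $\mathrm{KD}:\mathcal{T}_2(G)\to L^2(G\times\widehat{G})$ recalled in \eqref{KD unitary} is invertible, so for every Borel set $E\subset G\times\widehat{G}$ of finite $\mu_G\otimes\mu_{\widehat{G}}$-measure the indicator $\mathds{1}_E$ lies in $L^2$ and yields an operator $\mathcal{O}_E:=\mathrm{KD}^{-1}(\mathds{1}_E)\in\mathcal{T}_2(G)$, which is in particular compact and therefore bounded. Using $\mathcal{T}_1(G)\subset\mathcal{T}_2(G)$ and the overlap formula, I obtain for every $\rho\in\mathcal{T}_1(G)$
\begin{equation*}
\tr(\mathcal{O}_E^{*}\rho)=\int_{G\times\widehat{G}}\overline{\mathds{1}_E}\,\mathrm{KD}_\rho\,d\mu_G d\mu_{\widehat{G}}=\int_{E}\mathrm{KD}_\rho\,d\mu_G d\mu_{\widehat{G}}.
\end{equation*}
For $\rho$ a state, demanding that this integral lie in $[0,+\infty)$ for every such $E$ is equivalent to $\mathrm{Im}(\mathrm{KD}_\rho)=0$ almost everywhere together with $\mathrm{Re}(\mathrm{KD}_\rho)\geq 0$ almost everywhere, i.e.\ to $\mathrm{KD}_\rho$ being a positive measure on $G\times\widehat{G}$.

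Combining both parts,
\begin{equation*}
\KDpos=\{\text{states}\}\;\cap\;\bigcap_{E}\bigl\{\rho\in\mathcal{T}_1(G):\tr(\mathcal{O}_E^{*}\rho)\in[0,+\infty)\bigr\},
\end{equation*}
where $E$ ranges over the Borel subsets of $G\times\widehat{G}$ of finite measure; each set in this intersection is weakly closed, hence so is $\KDpos$. The only mildly delicate point, which I expect to be the main item to check rather than a real obstacle, is to ensure that the test operators witnessing KD-positivity genuinely belong to the defining family of the weak topology, i.e.\ are bounded. This is delivered automatically by unitarity of $\mathrm{KD}$ together with the fact that Hilbert-Schmidt operators are compact, so no further regularity input is required.
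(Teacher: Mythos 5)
Your proof is correct, but it takes a different route from the paper. You test positivity of $\mathrm{KD}_\rho$ directly against indicators $\mathds{1}_E$ of finite-measure Borel sets, using unitarity of $\mathrm{KD}:\mathcal{T}_2(G)\to L^2(G\times\widehat{G})$ from \eqref{KD unitary} to realize each test as $\rho\mapsto\tr(\mathcal{O}_E^{*}\rho)$ with $\mathcal{O}_E=\mathrm{KD}^{-1}(\mathds{1}_E)$ Hilbert--Schmidt, hence bounded; the paper instead invokes O'Conell's formula \eqref{KD and characteristic} together with Bochner's theorem to rephrase KD-positivity of a state as positive semi-definiteness of the characteristic function $\mathfrak{X}^1_\rho$, and then writes $\KDpos$ as an intersection of weakly closed conditions of the form $\sum_{i,j}z_i\overline{z_j}(\cdots)\tr(U(g_i-g_j,\chi_i\overline{\chi_j},1)\rho)\geq 0$ indexed by finite families of phase-space points, the test operators being the bounded unitaries $U(g,\chi,1)$. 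Your version is more elementary (no Bochner), exploiting that for a state $\rho\in\mathcal{T}_1(G)\subset\mathcal{T}_2(G)$ the distribution $\mathrm{KD}_\rho$ is an $L^2$ density, so ``positive measure'' reduces to a.e.\ nonnegativity, which is detected by finite-measure sets such as $\{\mathrm{Re}(\mathrm{KD}_\rho)<-\varepsilon\}$ and $\{\pm\mathrm{Im}(\mathrm{KD}_\rho)>\varepsilon\}$ (finite measure by Chebyshev since $\mathrm{KD}_\rho\in L^2$) --- it would strengthen the write-up to make this two-line verification explicit, as it is the only place where the equivalence with positivity could be questioned. The paper's route buys a reformulation intrinsically tied to the Weyl--Heisenberg picture (which is reused elsewhere), while yours buys a shorter, self-contained argument resting only on the overlap formula; you also prove, rather than merely assert, that the set of states is weakly closed. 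Both arguments correctly conclude weak closedness as an intersection of preimages of closed sets under maps $\rho\mapsto\tr(\rho A)$ with $A$ bounded.
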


\begin{proof}
    For $\rho$ a state on $L^2(G)$, its characteristic function $\mathfrak{X}^0_{\rho}$ belongs to $L^2(G\times \widehat{G})$. By formula \eqref{KD and characteristic}, and by Bochner's theorem (see \autocite{Rudin}), $\mathrm{KD}_{\rho}$ is positive if and only if $\mathfrak{X}^1_{\rho}$ is positive semi-definite, i.e. $\KDpos$ is equal to
    {\footnotesize
    \begin{equation*}
        \bigcap_{N\in \mathbb{N}}\bigcap_{z\in \mathbb{C}^N}\bigcap_{(g_i,\chi_i)\in (G\times \widehat{G})^N}\{\rho: \text{ state},\: \sum_{i,j=1}^Nz_i\overline{z_j}\overline{\chi_i(g_i)}\overline{\chi_j(g_j)}\chi_i(g_j)\chi_j(g_i)\tr(U(g_i-g_j,\chi_i\overline{\chi_j},1)\rho)\geq 0\}.
    \end{equation*}
    }
    Since the operators $U(g,\chi,1)$ are bounded for any $(g,\chi)\in G\times \widehat{G}$, and since the set of states is closed for the weak topology, $\KDpos$ is indeed closed for the weak topology.
\end{proof}
This in particular implies that $\KDpos$ is closed in the trace norm topology. Moreover, it is also a bounded set since for any $\rho\in \KDpos$, $||\rho||_1=\tr(|\rho|)=\tr(\rho)=1$. For a \gls{sclca} group $G$, $\mathcal{T}_1(G)$ is a separable Banach space, and can be seen as the dual of the Banach space of compact operators on $L^2(G)$, endowed with the operator norm. $\mathcal{T}_1(G)$ is therefore a separable dual Banach space. For such spaces, a generalization of Krein-Millman theorem holds, known as Lindenstrauss's theorem, see \autocite{Phelps}. Namely, any closed, bounded and convex subset of a separable dual Banach space is equal to the closed convex hull of its extreme points. In our setting, this yields:

\begin{corollary}
\label{KD pos closed convex}
    Let $G$ be a \gls{sclca} group. The set $\KDpos$ is closed in the trace norm topology, and is equal to the closed convex hull of its extreme points.
\end{corollary}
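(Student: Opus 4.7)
The plan is to deduce the corollary in a straightforward manner from Proposition \ref{KD pos closed} together with the general properties of $\mathcal{T}_1(G)$ recalled in Section 2.1.

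For the first assertion, I would observe that the weak topology on $\mathcal{T}_1(G)$ is generated by the seminorms $\rho \mapsto |\tr(\rho A)|$ with $A$ bounded, each of which is continuous in trace norm thanks to the elementary estimate $|\tr(\rho A)| \leq \|A\|_{\mathrm{op}} \|\rho\|_1$. Hence the weak topology is coarser than the trace norm topology, and every weakly closed set is trace norm closed. Combining this with Proposition \ref{KD pos closed} immediately yields that $\KDpos$ is closed in the trace norm topology.

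For the second assertion, I would check the hypotheses of Lindenstrauss's theorem as stated in \autocite{Phelps}. The set $\KDpos$ is convex, since $\mathrm{KD}$ is linear and the cone of positive measures on $G \times \widehat{G}$ is convex; it is bounded, since every $\rho \in \KDpos$ satisfies $\|\rho\|_1 = \tr(\rho) = 1$; and by the first part it is trace norm closed. As recalled in Section 2.1, $\mathcal{T}_1(G)$ is a separable dual Banach space, being the dual of the Banach space of compact operators on $L^2(G)$ equipped with the operator norm. Lindenstrauss's refinement of the Krein-Milman theorem then asserts that any closed, bounded, convex subset of a separable dual Banach space coincides with the norm-closed convex hull of its extreme points, and applying it to $\KDpos$ gives the desired description.

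I do not anticipate any serious obstacle: once Proposition \ref{KD pos closed} is in hand, the remainder of the argument is a direct invocation of a standard functional-analytic principle. The only point that deserves a small amount of attention is to invoke Lindenstrauss's strengthening rather than the classical Krein-Milman theorem, since the latter would only provide closure in a topology weak enough (for instance the weak-$*$ topology, in which $\KDpos$ is compact by Banach-Alaoglu) to apply to a compact convex set, whereas the statement of the corollary is phrased in the strictly finer trace norm topology.
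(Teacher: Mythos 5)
Your proof is correct and follows essentially the same route as the paper: trace-norm closedness of $\KDpos$ follows from Proposition \ref{KD pos closed} because the weak topology is coarser than the trace-norm topology, and then Lindenstrauss's theorem is applied to the closed, bounded, convex set $\KDpos$ inside the separable dual Banach space $\mathcal{T}_1(G)$. Only your final parenthetical is inaccurate: $\KDpos$ is in general \emph{not} weak-$*$ compact (Banach--Alaoglu gives compactness of the closed unit ball, not of $\KDpos$, which need not be weak-$*$ closed since mass can escape to infinity, e.g. $\proj{k}\to 0$ in the weak-$*$ sense as $k\to\infty$ for $G=\mathbb{S}^1$); this aside plays no role in your argument, which stands as written.
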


Let $\mathrm{Ext}(\KDpos)$ be the set of extreme points of $\KDpos$. Edgar's theorem \autocite{Edgar}, a generalization of Choquet's theorem to this non compact setting, states that for any $\rho\in \KDpos$, there exists a Borel probability measure $\mu_{\rho}$ on $\KDpos$, supported by $\mathrm{Ext}(\KDpos)$, such that
\begin{equation}
    \tr(\rho A)=\int_{\mathrm{Ext(\KDpos)}}\tr(\rho_0A)d\mu_{\rho}(\rho_0)
\end{equation}
for any bounded operator $A$ on $L^2(G)$. Those results allow us to reduce the descprition of $\KDpos$ to the descrpition of its extreme points. As the pure states are extreme points of the set of states, we have immediately
\begin{equation}
    \KDpospure\subset\mathrm{Ext}(\KDpos).
\end{equation}
One can then ask: does
\begin{equation}
\label{equality between pure and extreme}
    \KDpospure=\mathrm{Ext}(\KDpos)
\end{equation}
hold ?

In \autocite{KD_finite_groups}, examples are provided of finite groups for which this equality holds (cyclic groups of prime power order), and for which it does not (e.g. $\mathbb{Z}/2\mathbb{Z}\times \mathbb{Z}/2\mathbb{Z}$).
In the next subsection, we prove that it holds under the assumption that $G$ is a connected and compact \gls{sclca} group.

\subsection{Complete description of the classical fragment for connected compact SCLCA groups}
\label{3rd result}

In this subsection, we prove Theorem \ref{EKD+ on connected}. Combined with Theorem \ref{existence of KD real}, this solves the problem of the description of the classical fragment for any connected \gls{sclca} group. Of course by Pontryagin duality, Theorem \ref{EKD+ on connected} can be applied for groups for which $\widehat{G}$ is compact and connected, i.e. discrete torsion free groups.

As before we state again the result we prove:

\begingroup
\renewcommand{\thethm}{1.3}
\begin{thm}
    Let $G$ be a connected, compact, \gls{sclca} group. Then
    \begin{equation}
        \KDr=\overline{\mathrm{span}_{\mathbb{R}}(\KDpospure)} \text{ and }
        \KDpos=\mathrm{conv}(\KDpospure),
    \end{equation}
    where the former closure is taken with respect to the Hilbert-Schmidt topology while the latter closed convex hull is taken in the trace norm topology.
\end{thm}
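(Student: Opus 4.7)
The plan is to reduce the theorem to the statement that $\KDr$ coincides with the self-adjoint Hilbert–Schmidt operators that are diagonal in the Fourier basis $\{\chi\}_{\chi\in\widehat{G}}$, which by Peter–Weyl is an orthonormal basis of $L^2(G)$. First I would pin down $\KDpospure$. Because $G$ is connected, the only open subgroup of $G$ is $G$ itself; because $G$ is compact, $G$ is therefore a (the unique) compact open subgroup of itself, with $G^\perp=\{1\}$. Theorem \ref{KDpospure} then gives $\KDpospure=\{\,\proj{\chi}:\chi\in\widehat{G}\,\}$.

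The key technical input is that for every non-trivial $\chi\in\widehat{G}$, the closed subgroup $\ker\chi=\langle\chi\rangle^\perp\subset G$ has $\mu_G$-measure zero. Indeed, a closed subgroup of an LCA group of positive Haar measure is open, and if $\ker\chi$ were open then $G/\ker\chi$ would be a non-trivial discrete group; but the continuous surjection $G\twoheadrightarrow G/\ker\chi$ forces the quotient to be connected, hence trivial, contradiction. Using this together with lemma \ref{support of characteristic function}, an operator $A\in\mathcal{T}_2(G)$ belongs to $\KDr$ iff $\mathfrak{X}^0_A$ is essentially supported on $\{(g,\chi):\chi(g)=1\}$. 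Since $\widehat{G}$ is discrete (as $G$ is compact), $L^2(G\times\widehat{G})=\ell^2(\widehat{G};L^2(G))$, and by Tonelli the support condition becomes slicewise: $\mathfrak{X}^0_A(\cdot,\chi)=0$ in $L^2(G)$ for every $\chi\neq1$.

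Next I would translate this slicewise vanishing into a matrix condition. Writing $A=\sum_{\chi_1,\chi_2} A_{\chi_1,\chi_2}\ket{\chi_1}\bra{\chi_2}$ and using $T_g\ket{\chi'}=\overline{\chi'(g)}\ket{\chi'}$ together with $M_\chi\ket{\chi'}=\ket{\chi\chi'}$, one computes
\begin{equation*}
\mathfrak{X}^0_A(g,\chi)=\tr(AM_\chi T_g)=\sum_{\chi'\in\widehat{G}}\overline{\chi'(g)}\,A_{\chi',\,\chi\chi'}.
\end{equation*}
By Plancherel on $G$ (with $\widehat{G}$ as its dual), the vanishing of this $L^2$-function is equivalent to $A_{\chi',\chi\chi'}=0$ for all $\chi'\in\widehat{G}$. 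Imposing this for every $\chi\neq1$ forces every off-diagonal entry of $A$ to vanish, so $\KDr$ is exactly the space of self-adjoint Hilbert–Schmidt operators diagonal in the Fourier basis, that is $\{\sum_\chi a_\chi\proj{\chi}: a_\chi\in\mathbb{R},\ \sum a_\chi^2<\infty\}$. This is manifestly the Hilbert–Schmidt closure of $\mathrm{span}_\mathbb{R}\{\proj{\chi}:\chi\in\widehat{G}\}=\mathrm{span}_\mathbb{R}(\KDpospure)$, giving the first equality.

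For the second equality, the inclusion $\mathrm{conv}(\KDpospure)\subset\KDpos$ is immediate from convexity and trace-norm closedness of $\KDpos$ (proposition \ref{KD pos closed} and corollary \ref{KD pos closed convex}). Conversely, any $\rho\in\KDpos\subset\KDr$ is diagonal, $\rho=\sum_\chi p_\chi\proj{\chi}$; operator-positivity gives $p_\chi=\bra{\chi}\rho\ket{\chi}\ge0$ and $\tr(\rho)=1$ yields $\sum p_\chi=1$. Enumerating $\widehat{G}=\{\chi_n\}_{n\ge1}$, the finite convex combinations $\rho_N:=\sum_{n\le N}p_{\chi_n}\proj{\chi_n}+\bigl(1-\sum_{n\le N}p_{\chi_n}\bigr)\proj{\chi_1}$ lie in $\mathrm{conv}(\KDpospure)$ and converge to $\rho$ in trace norm, so $\rho\in\mathrm{conv}(\KDpospure)$. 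The main obstacle, if any, is step two: cleanly passing from the essential support condition on $\mathfrak{X}^0_A$ to slicewise vanishing, which hinges on both discreteness of $\widehat{G}$ and the measure-zero property of the kernels, and on being precise about essential versus topological support when invoking lemma \ref{support of characteristic function}.
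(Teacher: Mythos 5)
Your proof is correct and follows essentially the same route as the paper: the same combination of Lemma \ref{support of characteristic function}, the fact that connectedness forces $\mu_G(\langle\chi\rangle^{\perp})=0$ for $\chi\neq 1$, and discreteness/countability of $\widehat{G}$ to conclude that KD-real operators are diagonal in the Fourier basis, followed by the closedness results of section 4.3. The only (cosmetic) difference is that you extract diagonality by computing the matrix coefficients of $\mathfrak{X}^0_A$ in the character basis and applying Plancherel slicewise, whereas the paper inverts via O'Conell's formula to see that $\mathrm{KD}_\rho$ is independent of $g$; both arguments are equivalent, though you should note that the formula $\mathfrak{X}^0_A(g,\chi)=\tr(AM_\chi T_g)$ for merely Hilbert--Schmidt $A$ is to be understood via the density/unitarity extension of $\mathfrak{X}^0$, as in the paper.
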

\endgroup
The proof of this theorem uses once more Lemma \ref{support of characteristic function}.
\begin{proof}
    We prove that $\KDpos=\mathrm{conv}(\KDpospure)$. The proof for $\KDr$ is similar. Let $\rho\in \KDpos$. Then as in the proof of Theorem \ref{existence of KD real}, the characteristic function of $\rho$ is non-zero and is supported by
    \begin{equation*}
        \{(g,\chi)\in G\times \widehat{G}:\chi(g)=1\}=\bigcup_{\chi\in \widehat{G}}\langle \chi\rangle ^{\perp}\times \{\chi\}.
    \end{equation*}
     Recall that for any $\chi\in \widehat{G}$, $\langle \chi\rangle ^{\perp}$ is a closed subgroup, and has positive Haar measure if and only if it is also open. By connectedness, this implies that $\langle \chi\rangle ^{\perp}$ has positive Haar measure if and only if $\langle \chi\rangle ^{\perp}=G$, or equivalently $\chi=1$. Moreover, since $G$ is compact and second countable, $\widehat{G}$ is discrete and second countable, and therefore countable. It follows that
     \begin{equation*}
       \mu_G\otimes\mu_{\widehat{G}}\left(\bigcup_{\chi\neq 1}\langle \chi\rangle ^{\perp}\times \{\chi\}\right)=0,
    \end{equation*}
    and thus the essential support of $\mathfrak{X}^0_{\rho}$ is in fact contained in $G\times \{1\}$, i.e. there exists $u_{\rho}\in L^2(G)$ such that
    \begin{equation*}
        \mathfrak{X}^0_{\rho}(g,\chi)=u_{\rho}(g)\delta_{\chi,1}.
    \end{equation*}
    Plugging this expression into formula \eqref{KD and characteristic} yields
    \begin{equation}
        \mathrm{KD}_{\rho}(g,\chi)=\int_Gu_{\rho}(g')\chi(g')d\mu_G(g'),
    \end{equation}
    so $\mathrm{KD}_{\rho}$ does not depend on the variable $g\in G$. Now by assumption it is also a positive square integrable function on $G\times \widehat{G}$, so since $\widehat{G}$ is countable there exists $(\rho_{\chi})_{\chi\in \widehat{G}}\subset \mathbb{R}^+$ such that
    \begin{equation}
        \mathrm{KD}_{\rho}=\sum_{\chi\in \widehat{G}}\rho_{\chi}\mathds{1}_{\chi},
    \end{equation}
    and 
    \begin{equation}
        \sum_{\chi\in \widehat{G}}\rho_{\chi}^2<+\infty.
    \end{equation}
    Taking the inverse of the \KD distribution yields
    \begin{equation}
        \rho=\sum_{\chi\in \widehat{G}}\rho_{\chi}\proj{\chi}.
    \end{equation}
    The condition $\tr(\rho)=1$ reads $\sum_{\chi\in \widehat{G}}\rho_{\chi}=1$, hence $\rho\in \mathrm{conv}(\KDpospure)$. Finally, we have
    \begin{equation}
        \KDpospure\subset \KDpos\subset\mathrm{conv}(\KDpospure),
    \end{equation}
    and since $\KDpos$ is a closed convex set by Corollary \ref{KD pos closed convex}, the result follows.
\end{proof}

In the compact connected case, the KD-real observables are therefore exactly the observables that are diagonal in the Fourier basis.

\begin{rmk}
    The set $\KDpos$ is in general not closed in the Hilbert-Schmidt topology. Indeed take $G=\mathbb{S}^1$. Then the characters (or Fourier basis) are given by $(\ket{k})_{k\in \mathbb{Z}}$, where $\braket{z|k}=z^k$. By Theorem \ref{EKD+ on connected} each state of the form
    \begin{equation*}
        \rho_a=(1-e^{-a})\sum_{k=0}^{+\infty}e^{-ak}\proj{k},\: a>0
    \end{equation*}
    belongs to $\KDpos$. Moreover a simple computation shows that their Hilbert-Schmidt norm is
    \begin{equation*}
        ||\rho_a||_2^2=\frac{1-e^{-a}}{1+e^{-a}},
    \end{equation*}
    so it tends to $0$ as $a$ goes to $0$. Since $0\notin \KDpos$, the set isn't closed for the Hilbert-Schmidt topology.
\end{rmk}

\subsection{Positivity of the Kirkwood-Dirac and Wigner distributions of quantum states: discussion}
\label{KD vs Wigner}

We finish this article by comparing positivity properties of quantum states with respect to the Wigner and \KD distributions, depending on the structure of the groups.

As already mentioned, the \KD distribution can be defined on any \gls{sclca} group, as opposed to the Wigner distribution. For this reason, we will assume that the \gls{sclca} group $G$ has an invertible doubling map $g\mapsto g+g$, so that the Wigner distribution is well-defined \autocite{Crann}. Let us denote by $\Wpos$ the set of Wigner-positive sates and by $\Wpospure$ the set of Wigner-positive pure states. We compare those sets with the sets $\KDpos$ and $\KDpospure$.

The structure theorem implies that there exist a non-negative integer $n$ and a \gls{sclca} group $H$ admitting a compact open subgroup such that
\begin{equation*}
    G\simeq \mathbb{R}^n\times H.
\end{equation*}
For the Wigner distribution, the case where $H$ is trivial ($G\simeq \mathbb{R}^n$) has been extensively studied. The set $\Wpospure$ has been identified in the celebrated Hudson theorem \autocite{Hudson}. Yet, simply characterizing $\Wpos$ is still an open problem \autocite{ Bröcker_Werner, Mandilara_Karpov_Cerf, Cerf_Van_Herstraeten_Beam_splitter}. 

From Theorem \ref{existence of KD real}, if $n\geq 1$ in the above decomposition, then $\KDpos=\emptyset$, hence we will assume from now on that $n=0$ (equivalently $G$ admits a compact open subgroup). This case is precisely the one where the Wigner distributions of \textit{pure} states have been recently partially studied in \autocite{Crann}, and then in full generality in \autocite{NicolaRiccardi2025}. To the best of our knowledge, the Wigner distributions of mixed states have not been studied in this generality. For pure states, there are two subcases.

The first is when the invertible doubling map $g\mapsto g+g$ is not measure preserving (with respect to the Haar measure). In this case, Theorem 1.3 of \autocite{NicolaRiccardi2025} asserts $\Wpospure=\emptyset$. However $\KDpospure\neq \emptyset$, as ensured by Theorem \ref{existence of KD real}.

The second is when the doubling map is measure preserving, for example for any finite abelian group with odd order. In this case, Theorem \ref{KDpospure} of the present paper and Theorem 1.1 of \autocite{NicolaRiccardi2025} imply that we have the strict inclusion
\begin{equation}
    \KDpospure\subsetneq\Wpospure.
\end{equation}
 A natural question to ask is whether a similar inclusion holds for mixed states. Despite having few available results on the Wigner distributions of mixed states, we can give a positive answer for groups for which 
\begin{equation}
\label{convex equality 2}
    \KDpos=\mathrm{conv}(\KDpospure)
\end{equation}
holds, in addition to the above assumptions. Indeed we get in this case
\begin{equation}
    \KDpos=\mathrm{conv}(\KDpospure)\subset \mathrm{conv}(\Wpospure)\subset \Wpos,
\end{equation}
since $\Wpos$ is a convex set.
In particular, we can give three families of groups for which \KD positivity implies Wigner positivity. The first is composed of the groups $\mathbb{Z}/p^k\mathbb{Z}$, for $p\geq 3$ a prime number and $k\geq 1$ \autocite{KD_finite_groups}. The second is the family of compact connected \gls{sclca} groups with an invertible and measure preserving doubling map, as a consequence of Theorem \ref{EKD+ on connected}. The third consists of groups which are dual to those in the second family: discrete torsion-free \gls{sclca} groups with an invertible and measure preserving doubling map. An example of a group in the third family is the additive group of rational numbers $\mathbb{Q}$ endowed with the discrete topology. This means that its dual, $\mathbb{A}_{\mathbb{Q}}/\mathbb{Q}$ \autocite{vourdasHarmonicAnalysisRational2012} is in the second family.

One could finally ask what happens when considering generalized states. The simplest example to study is $G=\mathbb{R}$. In that case the KD-positive generalized pure states are the position basis, the momentum basis and the GKP states (we refer to the Introduction for more details). It is easy to compute that the position and momentum bases have positive Wigner distributions. However, it is well-known that the GKP states are ``highly Wigner non-positive'' \autocite{GKP}.

In the end there isn't any clear and direct relation between KD positivity and Wigner positivity in general. However these two notions are not unrelated either.

\emph{Acknowledgments:} This work was supported in part by the CNRS through the MITI interdisciplinary programs. The author acknowledges the support of the CDP C2EMPI, as well as the French State under the France-2030 programme, the University of Lille, the Initiative of Excellence of the University of Lille, the European Metropolis of Lille for their funding and support of the R-CDP-24-004-C2EMPI project.
The author thanks S. De Bièvre, C. Langrenez and D. Radchenko for introducing the \KD distribution to him and for enlightening discussions.

This work is part of the author's PhD thesis.
\printbibliography
\end{document}